\newtheorem{obs}{Observation}
\newtheorem{col}{Corollary}
\newcommand{\remove}[1]{}
\begin{document}
\title{Color Spanning Annulus: Square, Rectangle and Equilateral Triangle}
\author{Ankush Acharyya, Subhas C. Nandy and Sasanka Roy}
\institute{Indian Statistical Institute, Kolkata, India}

\maketitle

\begin{abstract} In this paper, we study different variations of minimum width 
color-spanning annulus problem among a set of points $P=\{p_1,p_2,\ldots,p_n\}$ 
in $I\!\!R^2$, where each point is assigned with a color in $\{1, 2, \ldots, k\}$. 
We present algorithms for finding a minimum width color-spanning axis parallel 
square annulus $(CSSA)$, minimum width color spanning axis parallel rectangular
annulus $(CSRA)$, and minimum width color-spanning equilateral triangular annulus
of fixed orientation $(CSETA)$. The time complexities of computing (i) a 
$CSSA$ is $O(n^3+n^2k\log k)$ which is an improvement by a factor $n$ over the
existing result on this problem, (ii) that for a $CSRA$ is $O(n^4\log n)$, and for (iii) 
a $CSETA$ is $O(n^3k)$.  The space complexity of all the algorithms is $O(k)$.
\end{abstract}


\section{Introduction}
The motivation for studying color-spanning 
objects stems from the facility location problems, where we may have different 
types of facilities, and the objective is to identify a location of desired 
shape with at least one copy of each facility and the measure of the region is 
optimized. In this paper, we study the minimum width color-spanning annulus 
problem for different objects.

A point set $P=\{p_1,\ldots,p_n\}$ is given in $I\!\!R^2$. Each point 
$p\in P$ is assigned a color from the set $\chi=\{1,2,\ldots, k\}$ of $k$ 
distinct colors. There exists at least one point of each color. A region is 
{\em color-spanning}  if it contains at least one point of each colors. 
An annulus $\cal A$ is a region bounded by two co-centric homothetic closed curves 
$C_{in}$ (inner curve) and $C_{out}$ (outer curve).
The (common) center $c$ of $C_{in}$ and $C_{out}$ is 
referred to as the {\em annulus-center}, and the {\em width} of the 
annulus is the Euclidean distance between two closest points on the boundary of 
$C_{in}$ and $C_{out}$ respectively. 
In this paper, we are interested to find square, rectangular and triangular annulus The objective is to minimize the width of the annulus.

\textcolor{blue}{\it Related Work:}
The minimum width annulus problem is well studied in the literature. The most 
common variation of this problem is the color-spanning circle. Abellanas et al. 
\cite{abellanas2001smallest} showed that the smallest color spanning circle can 
be computed in $O(kn\log n)$ time. Abellanas et al. \cite{abellanas2001farthest} 
also showed that the narrowest color-spanning strip and smallest axis-parallel 
color-spanning rectangle can be found in $O(n^2\alpha(k)\log k)$ and 
$O(n(n-k)\log^2n)$ time respectively.  Das et al. \cite{das2009smallest} 
improved the time complexity of narrowest color spanning strip problem to 
$O(n^2\log n)$, and smallest color-spanning axis-parallel rectangle problem to 
$O(n(n-k)\log k)$. They also provided a solution for the arbitrary oriented 
color-spanning rectangle problem in $O(n^3\log k)$ time using $O(n)$ space. 
Recently, Khanteimouri et al. \cite{khanteimouri2013computing} presented an 
algorithm for color spanning square in $O(n\log^2 n)$ time. Khanteimouri et al. 
\cite{hasheminejadcomputing} also presented a solution for the color spanning 
axis-parallel equilateral triangle in $O(n\log n)$ time.

On the other hand, the problem of computing the minimum width annulus is also 
studied in the literature. Given a set of $n$ points, computing the minimum 
width circular annulus containing all the points was independently addressed in 
\cite{de2000computational,roy1992establishment,wainstein1986non}.  All of their 
methods result in a time complexity of $O(n^2)$. There are also sub-quadratic 
time algorithms for the circular annulus problem. Using parametric searching 
technique, Agarwal et al. \cite{agarwal1992applications} presented an 
$O(n^{\frac{8}{5}+\epsilon})$ time algorithm. Agarwal et al. 
\cite{agarwal1996efficient} also presented a randomized algorithm for the same 
problem which runs in $O(n^{\frac{3}{2}+\epsilon})$ time. To talk about the 
variations other than circular annulus for a general point set in $I\!\!R^2$, 
the well known results are an $O(n)$ time optimum algorithm for the axis 
parallel rectangular annulus by Abellanas et al.  \cite{abellanas2004best}, 
and an $O(n\log n)$ time optimum algorithm for the axis parallel square annulus 
by Gluchshenko et al. \cite{gluchshenko2009optimal}. Mukherjee et al. 
\cite{mukherjee2011minimum} proposed an algorithm for computing the minimum 
width axis parallel rectangular annulus for a point set in $I\!\!R^d$ in $O(nd)$ 
time. They also proposed an algorithm for arbitrary oriented minimum width 
rectangular annulus in $I\!\!R^2$ that runs in $O(n^2\log n)$ time using $O(n)$ 
space. Recently, Bae \cite{bae2016computing} proposed the minimum width square 
annulus of arbitrary orientation in $O(n^3\log n)$ time.  The color spanning 
annulus problem is comparatively new in the literature. Acharyya et al. 
\cite{acharyya2016minimum} presented two algorithms for finding the minimum 
width color spanning circular and axis-parallel square annulus. Both the 
algorithms run in $O(n^4\log n)$ time using $O(n)$ space.

\textcolor{blue}{\it Main Contribution:} 
In this paper, we propose algorithms to compute the minimum width color-spanning 
annulus where $C_{in}$ and $C_{out}$ are (i) axis-parallel squares $(CSSA)$, 
(ii) axis-parallel rectangles $(CSRA)$, and (iii) equilateral triangles of fixed 
orientation $(CSETA)$. The time complexities of the proposed algorithms are: (i)  for a $CSSA$ is 
$O(n^3+ n^2k\log k)$, (ii) for a $CSRA$ is $O(n^4\log n)$, and (iii) for a $CSETA$ 
is $O(n^3k)$. The space complexity of all the algorithms is $O(k)$. The algorithm 
for $CSSA$ is an improvement of the existing result of \cite{acharyya2016minimum} on 
this problem by a factor of $n$. Moreover, if $k$ is constant, then the improvement factor is
$n\log n$.

\section{Preliminaries}

Interior of an annulus $\cal {A}$, defined by $INT({\cal A})$, is the 
region inside $\cal {A}$ excluding $C_{in}$ and $C_{out}$. 
\begin{obs}\cite{acharyya2016minimum} \label{distinct-color}
 The points of distinct color lying on $C_{in}$ and $C_{out}$ are said to define an annulus $\cal A$, 
and $INT(\cal A)$ does not contain any point of color same as those defining the
annulus $\cal A$.
\end{obs}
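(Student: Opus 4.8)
The plan is to read the statement as a structural property of a \emph{minimum-width} color-spanning annulus ${\cal A}$, since for an arbitrary annulus the interior claim need not hold; I would then establish it by a perturbation (exchange) argument. First I would make precise which points are the defining ones: a point on $C_{out}$ is a defining point when it is a binding constraint that prevents the outer boundary from being pulled inward, and symmetrically a point on $C_{in}$ is one that prevents the inner boundary from being pushed outward. Because ${\cal A}$ is color-spanning, its closed region carries at least one point of every color, so I can select a defining witness of distinct color pinned to each relevant boundary.

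Next I would prove the interior claim by contradiction. Suppose some $q \in INT({\cal A})$ has the same color $c$ as a defining point $p$. Since $INT({\cal A})$ is open, $q$ lies strictly between $C_{in}$ and $C_{out}$, so color $c$ is already witnessed in the region without $p$. If $p$ lies on $C_{out}$, I would move the outer boundary inward; a sufficiently small inward displacement keeps the strictly-interior point $q$ (hence color $c$) inside the region, so the annulus stays color-spanning while its width strictly decreases, contradicting minimality. The case $p \in C_{in}$ is symmetric: pushing the inner boundary outward toward $q$ again reduces the width while preserving all colors. Hence no interior point may repeat a defining color. The distinctness of the defining colors follows by the same mechanism: two boundary points of the same color would let me relax the boundary carrying the redundant copy, again shrinking the width.

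The step I expect to be the main obstacle is making the boundary perturbation rigorous, because $C_{out}$ (and $C_{in}$) is not a free arc but a rigid homothet of a square, rectangle, or equilateral triangle of fixed orientation; shrinking it uniformly moves every side at once and could expel a point of some other color. To handle this I would not shrink the whole shape uniformly, but exploit the degrees of freedom of the concentric homothetic family — the common center together with the scale parameters of the two boundaries — and combine a small translation of the center with an adjustment of the appropriate scale so that only the portion of the boundary carrying the redundant point is relaxed, while every other defining witness is kept on its boundary under a sufficiently small, one-sided move. Showing that such a width-decreasing direction always exists whenever a defining color is duplicated in the interior is the crux, and it is precisely what rules out the offending configuration.
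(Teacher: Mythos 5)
The paper itself offers no proof of this observation --- it is imported verbatim from \cite{acharyya2016minimum} --- so your proposal can only be judged on its own merits. Your overall reading (a structural property of a \emph{minimum-width} color-spanning annulus, to be proved by a perturbation argument) is the right spirit, but the execution has a genuine gap, and you have located it yourself without resolving it. The step ``move $C_{out}$ inward; the interior point $q$ keeps color $c$ inside, so the annulus stays color-spanning'' is invalid: color-spanning requires \emph{all} $k$ colors to survive, and shrinking $C_{out}$ expels every \emph{other} point lying on $C_{out}$, whose colors may have no witness elsewhere in the annulus. Your proposed repair --- combining a small translation of the center with a scale adjustment so that only the portion of the boundary carrying $p$ is relaxed while every other defining witness stays pinned --- is geometrically impossible in general: an axis-parallel square (or a fixed-orientation equilateral triangle) has only three degrees of freedom, so once binding points sit on all four sides of $C_{out}$ (three sides for the triangle), fixing the remaining sides fixes the whole curve and the width-decreasing direction you need simply does not exist. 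Moreover, the statement you are implicitly aiming at --- that \emph{every} point lying on (or pinning) the boundary has its color absent from the interior --- is false: an optimal annulus can carry a redundant boundary point whose color also appears strictly inside; such a point is just not counted among the defining points.

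The correct argument reverses the quantifiers and needs no pinning at all: one \emph{chooses} the defining points rather than verifying the property for given ones. Let ${\cal A}$ be optimal with outer scale $R$ strictly larger than inner scale $r$ (if they coincide, $INT({\cal A})=\emptyset$ and there is nothing to prove). Shrinking $C_{out}$ alone, center fixed, strictly decreases the width, so by minimality every sufficiently small shrink must destroy color-spanning; since $P$ is finite, this forces the existence of a color $i^*$ all of whose representatives in ${\cal A}$ lie exactly on $C_{out}$. Symmetrically, growing $C_{in}$ alone must destroy color-spanning, yielding a color $j^*$ living only on $C_{in}$; and $i^*\neq j^*$ because $C_{out}\cap C_{in}=\emptyset$. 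Taking a point of color $i^*$ on $C_{out}$ and a point of color $j^*$ on $C_{in}$ as the defining points, distinctness of colors and the absence of those colors from $INT({\cal A})$ hold by construction. This existence (``WLOG'') form is exactly how the observation is used later in the paper (e.g., in the proof of Lemma~\ref{four_points}), and it sidesteps entirely the rigidity obstacle that your perturbation scheme cannot overcome.
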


Throughout the paper, we use $h_p$ and $v_p$ to denote the horizontal and vertical lines 
passing through point $p$. $d(p_i,p_j),d_\infty(p_i,p_j)$ denote the distance 
between the pair of points $p_i,p_j \in P$ in $L_2$ and $L_\infty$ norm respectively. The closest 
distance of a line segment $\ell$ from a point $p_i\in P$ will be denoted by 
$d(\ell,p_i)$. We also denote $color(p)$ as the color of a point $p\in P$.

\section{Axis Parallel Color-spanning Square Annulus} 


An {\em axis-parallel color-spanning square annulus $(CSSA)$} is a color-spanning annulus $\cal A$
bounded by two co-centric axis-parallel squares $C_{out}$ and $C_{in}$. In \cite{acharyya2016minimum},
it is shown that either $C_{in}$ or $C_{out}$ of a minimum width square annulus $(CSSA)$ has two points
of different colors in its two boundaries. We prove a stronger claim.

\begin{lemma}
Either $C_{in}$ or $C_{out}$ of a $CSSA$ has two points of distinct color on its two mutually parallel boundaries.
\end{lemma}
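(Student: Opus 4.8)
The plan is to argue by contradiction through a local perturbation of an optimal annulus, using Observation~\ref{distinct-color} to control which colors can leave the ring. Write $c$ for the annulus-center and let $r<R$ be the half-side lengths of $C_{in}$ and $C_{out}$, so that a point $p$ lies in $\mathcal{A}$ iff $r\le d_\infty(p,c)\le R$ and the width equals $R-r$. By Observation~\ref{distinct-color} the optimal $\mathcal{A}$ is pinned by a set of \emph{defining} points of pairwise distinct colors, each lying on a side of $C_{in}$ or of $C_{out}$, whose colors are absent from $INT(\mathcal{A})$; color-spanning is exactly what forbids these points from being pushed out of the ring. I would suppose, for contradiction, that \emph{neither} square carries two defining points on a pair of parallel (opposite) sides, and then produce a perturbation that does not increase the width yet contradicts optimality.

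The mechanics I would use are the following. A defining point on a fixed side contributes a linear equation in $(c,R)$ or $(c,r)$; for instance a point on the right side of $C_{out}$ fixes $c_x+R$. If the center is displaced by $t\mathbf{u}$ for small $t>0$, the $L_\infty$-distance of a point on a side with outward unit normal $\mathbf{n}$ changes to first order by $-(\mathbf{n}\cdot\mathbf{u})\,t$. Hence to shrink $C_{out}$ I move $c$ so that $\mathbf{n}\cdot\mathbf{u}>0$ for the outward normals of all occupied outer sides (this pulls every outer defining point inward and lets $R$ drop), and to grow $C_{in}$ I move so that $\mathbf{n}\cdot\mathbf{u}<0$ on the occupied inner sides (this pushes inner points outward and lets $r$ rise). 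Under the contradiction hypothesis each square occupies at most two \emph{adjacent} sides, so the relevant normals are never antipodal and each such direction exists in isolation. If one direction $\mathbf{u}$ simultaneously shrinks $C_{out}$ while keeping every inner defining point in the ring (or symmetrically grows $C_{in}$ while keeping the outer points in), the width strictly decreases and optimality is contradicted.

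The hard part, and where I expect to spend the most care, is the coupling through the shared center: it may happen that every outer-shrinking direction pushes some inner defining point into the central hole, so that no single $\mathbf{u}$ strictly reduces the width. I would show that this obstruction arises only when an occupied side of $C_{in}$ and an occupied side of $C_{out}$ share a common outward normal, and that such a shared orientation always leaves a width-preserving translation of $\mathcal{A}$ available. To rule this out I would pass to a \emph{rigid} optimum admitting no width-preserving translation: either the minimum width is attained as the square grows without bound, so $\mathcal{A}$ degenerates to a strip whose two parallel bounding lines already carry the desired pair, or the set of optimal centers is bounded---a width-$w^\ast$ ring placed far from the finite set $P$ cannot span all colors once $w^\ast$ is smaller than both coordinate spreads of $P$---hence compact, and I take an extreme point of it. At such a rigid optimum no shared orientation can occur, so under the contradiction hypothesis neither square can lock the center against translation; this forces two defining points onto a pair of parallel sides of one square, and Observation~\ref{distinct-color} makes their colors distinct, which proves the claim.
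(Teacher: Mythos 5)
Your overall strategy (perturb a pinned optimum and use Observation~\ref{distinct-color} to control color-spanning) is in the right spirit, but the step you yourself flag as the hard part is exactly where the argument breaks, and the patch you propose does not work. Both of your pivotal claims --- that a shared outward normal between an occupied side of $C_{in}$ and an occupied side of $C_{out}$ ``always leaves a width-preserving translation of $\mathcal{A}$ available,'' and that ``at such a rigid optimum no shared orientation can occur'' --- fail on the following configuration: the defining points lie on the top and left sides of $C_{out}$ \emph{and} on the top and left sides of $C_{in}$ (two shared normals, no parallel pair on either square). A translation $\mathbf{u}=(u_x,u_y)$ keeps the two outer contacts inside the annulus only if $u_y\ge 0$ and $u_x\le 0$, and keeps the two inner contacts out of the hole only if $u_y\le 0$ and $u_x\ge 0$; hence $\mathbf{u}=\mathbf{0}$, i.e.\ this annulus is translation-rigid. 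It is also first-order optimal: any perturbation (translate by $\mathbf{u}$, change the half-side lengths by $dR$ and $dr$) must satisfy $dR\ge\max(-u_y,\,u_x)$ and $dr\le\min(-u_y,\,u_x)$, so $dR-dr\ge 0$ and the width cannot decrease. So this configuration is rigid, has shared orientations, has no parallel pair, and survives your whole argument without producing a contradiction; taking extreme points of the set of optimal centers does not help, since the center of this configuration can itself be such an extreme point (it is the endpoint of the segment of optimal centers in the direction maximizing $c_x-c_y$).

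The missing idea is that the width-preserving deformations of a square annulus are not only translations: one may translate the center by $(-t/2,+t/2)$ \emph{while shrinking both half-side lengths by} $t/2$. This is precisely the move in the paper's proof: it slides the two unoccupied sides (bottom and right) of \emph{both} squares inward at equal speed, keeps the top and left sides of both squares --- and hence all four contacts --- fixed, preserves the width, and never expels a point from the annular region (points can only enter through the shrinking hole), so the annulus stays color-spanning. The motion stops when the bottom or the right side of $C_{out}$ first meets a point, at which moment $C_{out}$ carries points on two mutually parallel sides; a symmetric expansion handles the case where $C_{in}$ is the pinned square. In your language, the fix is to define rigidity with respect to this larger family of deformations (equivalently, to choose the optimum extreme in a direction that this coupled translate-and-deflate move strictly improves, such as minimizing the outer radius or $c_x-c_y$); with pure translations the argument cannot escape the configuration above.
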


\begin{proof}
For a contradiction, let only the two mutually perpendicular boundaries (say top and left boundaries) 
of the $C_{out}$ of a $CSSA$ contain two points of different colors. We can reduce the size of $C_{out}$ 
(as well as $C_{in}$) by moving the bottom boundary upward and the right boundary to the 
left of both $C_{out}$ and $C_{in}$ with the same speed until the bottom boundary or the 
right boundary of $C_{out}$ touches a point. Observe that, during this movement if the 
bottom and/or right boundaries of $C_{in}$ encounter some point, those points will enter 
in the annular region, but no point goes out from the inside to the outside of the annular
region. Thus, the annulus remains color-spanning and its width does not increase.

If two mutually perpendicular boundaries of $C_{in}$ contain two points of 
different colors, then also both $C_{in}$ and $C_{out}$ can be expanded keeping 
the annular region color-spanning such that the width of the annulus does not
increase until the lemma is satisfied by $C_{in}$.   Thus the lemma follows. \qed 
\end{proof}


We consider each pair of bi-colored points $p,q\in P$ to define 
the mutually parallel boundaries of $C_{out}$ and compute the minimum width annulus. 
Similar method works for defining the mutually parallel boundaries of $C_{in}$ with 
bi-colored pair of points in $P$.

\begin{lemma} \label{outer-inner}
If $C_{out}$ of a minimum width axis-parallel square annulus $\cal A$, is defined 
by two points $p,q \in P$ on its two parallel boundaries, and its 
annulus-center is fixed at a point $c$, then its $C_{in}$ must pass through a point $r$ with 
minimum $L_\infty$ distance among the farthest points of every color $i \in \{1,2,\ldots k\} 
\setminus \{color(p),color(q)\}$ from the annulus-center $c$.
\end{lemma}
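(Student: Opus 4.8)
The plan is to exploit the fact that fixing both the annulus-center $c$ and the two points $p,q$ on parallel boundaries completely determines $C_{out}$. Since $p$ and $q$ lie on opposite sides of an axis-parallel square centered at $c$, the half-side length (equivalently the $L_\infty$-radius) of $C_{out}$ is $R_{out}=d_\infty(c,p)=d_\infty(c,q)$. Writing $R_{in}$ for the $L_\infty$-radius of the concentric inner square $C_{in}$, the width of the annulus equals $R_{out}-R_{in}$, the closest points of the two boundaries being the midpoints of corresponding sides. Because $R_{out}$ is now fixed, minimizing the width is the same as \emph{maximizing} $R_{in}$, so the whole lemma reduces to determining how large $R_{in}$ can be while the annulus remains color-spanning.

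Next I would translate the color-spanning requirement into one bound on $R_{in}$ per color. The colors $color(p)$ and $color(q)$ are automatically represented, since $p,q$ lie on $C_{out}$ and hence in the annulus, so only the colors $i\in\{1,\dots,k\}\setminus\{color(p),color(q)\}$ matter. A point $x$ of color $i$ lies in the annulus iff $R_{in}\le d_\infty(c,x)\le R_{out}$. For each such color I define
\[
 f_i=\max\{\,d_\infty(c,x): color(x)=i,\ d_\infty(c,x)\le R_{out}\,\},
\]
the $L_\infty$-distance of the point of color $i$ farthest from $c$ among those lying inside $C_{out}$ (by feasibility of the configuration, each color has at least one such point). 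The key equivalence to establish is that the annulus contains a point of color $i$ if and only if $R_{in}\le f_i$: when $R_{in}\le f_i$ the realizing point sits in the annulus, whereas when $R_{in}>f_i$ every point of color $i$ inside $C_{out}$ is strictly interior to $C_{in}$ and every remaining point of that color lies outside $C_{out}$, so color $i$ is unrepresented.

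Finally, combining these per-color constraints gives $R_{in}\le\min_i f_i$, and since the width strictly decreases as $R_{in}$ grows, the optimum takes $R_{in}=\min_i f_i$. Letting $r$ be the point realizing this minimum, $r$ is exactly the point of minimum $L_\infty$-distance from $c$ among the per-color farthest points, and $C_{in}$ (having radius $R_{in}=d_\infty(c,r)$) passes through it, which is the claim. I expect the only delicate point to be the ``only if'' direction of the equivalence above: one must argue that restricting attention to the farthest in-$C_{out}$ representative of each color loses nothing, i.e. that no value of $R_{in}$ exceeding $f_r$ can be salvaged by another point of color $color(r)$ — which is precisely why the governing quantity is the farthest point (within $C_{out}$) of each color rather than an arbitrary representative.
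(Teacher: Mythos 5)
Your proposal is correct and takes essentially the same route as the paper's proof: with $c$ and $p,q$ fixed, $C_{out}$ is determined, minimizing the width is equivalent to maximizing the inner radius, and the per-color constraint is governed by the farthest point of each color inside $C_{out}$, so $C_{in}$ passes through the closest of these farthest points. Your version is merely more explicit than the paper's (notably the restriction of ``farthest'' to points with $d_\infty(c,x)\le R_{out}$ and the two-directional equivalence between ``color $i$ is represented'' and $R_{in}\le f_i$), but the underlying argument is identical.
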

\begin{proof}
 As the annulus-center is fixed at the point $c$ and its radius is also 
 fixed $d_\infty(c,p)$, $C_{out}$ is fixed. Now, 
 for each color $i \in \{1,2,\ldots k\} \setminus \{color(p),color(q)\}$, the point 
 of color $i$ having farthest distance must lie inside the annular region. Since 
 all these $k-2$ points need to be included in the annular region, $C_{in}$ will be 
 defined by one such point which is closest to $c$. \qed
\end{proof}

Let $y(p)>y(q)$, and consider the horizontal strip of width $\delta=y(p)-y(q)$, defined by 
the horizontal lines $h_p$ and $h_q$. If $|x(p)-x(q)| >\delta$, then $p,q$ can not 
define $C_{out}$. Otherwise the points $p,q$ define $C_{out}$ whose center lies on a horizontal 
interval $C=[a,b]$, where $a=(\min(x(p),x(q))+\frac{\delta}{2},\frac{y(p)+y(q)}{2})$ 
and $b=(\max(x(p),x(q))-\frac{\delta}{2},\frac{y(p)+y(q)}{2})$. This configuration 
always holds as shown in \cite{bae2016computing,gluchshenko2009optimal}.
\begin{figure}[h]
 \centerline{\includegraphics[scale=0.5]{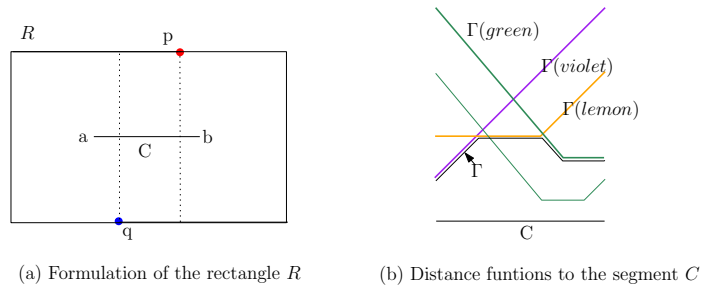}}
 \caption{Illustration of Square Annulus construction}
  \label{sq_algo}
\end{figure}

Consider the left boundary of the square centered at $a$ and the right boundary of the 
square centered at $b$, both of radius $\frac{\delta}{2}$. These boundaries along with 
the horizontal lines through $p,q$ defines a rectangle $R$ (see Fig. \ref{sq_algo}(a)). 
All possible feasible $C_{out}$ are contained in $R$. We consider the subset $P'\subseteq P$ 
within $R$ and verify whether $P'$ is color spanning by using a linear scan. If $P'$ is 
not color spanning, then we can not have any $CSSA$ contained in $R$. Hence 
we discard the horizontal strip defined by the pair $p,q$. Otherwise, for each point 
$r \in P'$ ($color(r) \not\in \{color(p),color(q)\}$), we plot its distance 
in $L_\infty$ metric from different points of line segment $C$. Each of these distance 
functions $f(r)$ is a combination of line segments with slopes in $\{1,0,-1\}$ as described 
in \cite{acharyya2016minimum,bae2016computing} (see Fig. \ref{sq_algo}(b)). 

\begin{lemma} \cite{acharyya2016minimum,bae2016computing} \label{jjj}
The distance curves $f(r)$ and $f(s)$ for the points $r,s \in P \cap R$ respectively 
intersect at exactly one point.
\end{lemma}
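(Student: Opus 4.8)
The plan is to reduce the lemma to a statement about two univariate, piecewise-linear convex functions and to show that their difference changes sign exactly once. I would parametrize the annulus-center as it slides along $C$ by its $x$-coordinate $t$, writing each center as $(t,y_0)$ with $y_0=\frac{y(p)+y(q)}{2}$. For a point $r=(x_r,y_r)$ inside $R$, the curve that was plotted is precisely the $L_\infty$ distance $f(r)(t)=d_\infty((t,y_0),r)=\max(|t-x_r|,\,c_r)$, where $c_r=|y_0-y_r|$ is a constant (and $c_r\le\frac{\delta}{2}$ since $r\in R$). Thus $f(r)$ is a convex ``trough'': it decreases with slope $-1$, is flat at height $c_r$ on $[x_r-c_r,\,x_r+c_r]$, and then increases with slope $+1$; its slope lies in $\{-1,0,+1\}$ and is non-decreasing. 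The same holds for $f(s)$ with constant $c_s=|y_0-y_s|$.

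First I would record the behaviour of $g:=f(r)-f(s)$ at the two ends. Assuming without loss of generality $x_r<x_s$ and writing $\Delta x:=x_s-x_r>0$, for $t$ sufficiently small both curves are on their decreasing branches so $g(t)=x_r-x_s<0$, and for $t$ sufficiently large both are on their increasing branches so $g(t)=x_s-x_r>0$. Hence $g$ takes opposite signs at $\pm\infty$ and, being continuous, has at least one zero; the whole content of the lemma is the uniqueness of this zero.

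For uniqueness I would show one sub-level set of $g$ is a ray. Expanding $f(r)(t)\le f(s)(t)$ as $\max(t-x_r,\,x_r-t,\,c_r)\le\max(t-x_s,\,x_s-t,\,c_s)$ splits into three scalar inequalities. The one coming from $x_r-t$ is automatic, since $x_r-t\le x_s-t$ and the right side is $\ge x_s-t$; and under the generic assumption $c_r\ne c_s$, in the case $c_r<c_s$ the inequality coming from $c_r$ is vacuous because $f(s)\ge c_s>c_r$. What survives is the single condition from $t-x_r$, namely $t\le x_r+\max(\tfrac{\Delta x}{2},c_s)=:T$, so $\{t:f(r)(t)\le f(s)(t)\}=(-\infty,T]$ and $g>0$ strictly to the right of $T$. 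A short case check rules out any interior zero $t_0<T$: equality $f(r)(t_0)=f(s)(t_0)=:\rho$ forces $\rho\ge c_s>c_r$, hence $r$ lies on a sloped branch with $|t_0-x_r|=\rho$, and matching this against the two possibilities for $f(s)(t_0)$ (sloped or flat) pins $t_0$ to the midpoint $\tfrac{x_r+x_s}{2}$ or to $x_r\pm c_s$, each of which either coincides with $T$ or violates the flat-branch condition $|t_0-x_s|\le c_s$. The symmetric argument with $r,s$ interchanged covers $c_r>c_s$, and together they yield a single transversal crossing.

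The step I expect to be the main obstacle is exactly this uniqueness argument, because $g$ need not be monotone: when $|c_r-c_s|>\Delta x$ one trough's flat part overruns the other's breakpoint and $g$ genuinely decreases on a sub-interval. One must therefore argue structurally (the sub-level set is a ray) rather than by plain monotonicity, and one must isolate the degeneracies that actually break the statement. If $x_r=x_s$ the troughs share a branch, and if $c_r=c_s$ (i.e.\ $y_r+y_s=2y_0$) their flat bottoms can overlap, producing in either case a whole segment of coincidence. I would dispose of these by invoking general position ($x_r\ne x_s$ and $c_r\ne c_s$), under which the overlap cannot occur. As a geometric sanity check I would note that the intersection points are exactly $C\cap\mathcal B$, where $\mathcal B$ is the $L_\infty$ bisector of $r$ and $s$; away from the $45^\circ$ degeneracy this bisector is monotone in the direction transverse to $C$ and therefore meets the line carrying $C$ in a single point.
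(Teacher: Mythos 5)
Your proof is correct, but there is nothing in the paper to compare it against: Lemma \ref{jjj} is not proved in this paper at all, it is imported by citation from \cite{acharyya2016minimum,bae2016computing}. Your argument is therefore a self-contained substitute, and it holds up under checking. Writing $f(r)(t)=\max(|t-x_r|,c_r)$ with $c_r=|y_0-y_r|$ and showing that the sub-level set $\{t: f(r)(t)\le f(s)(t)\}$ is the ray $(-\infty,T]$ with $T=x_r+\max(\frac{\Delta x}{2},c_s)$ is correct; the symmetric computation with $r$ and $s$ exchanged gives $\{t: f(s)(t)\le f(r)(t)\}=[T,\infty)$, so the equality set is the single point $\{T\}$, which is exactly what your case check also establishes. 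Your structural remark is the heart of the matter: $g=f(r)-f(s)$ is genuinely non-monotone when $|c_r-c_s|>\Delta x$ (it decreases on $[x_s-c_s,\,x_r-c_r)$), so any proof that tacitly assumes $g$ is increasing would be wrong, and the one-crossing property has to be argued structurally as you do. Two caveats are worth recording. First, your general-position hypothesis is not cosmetic but necessary: for $x_r=x_s$ the two curves coincide along two rays, and for $c_r=c_s$ (which occurs when $y_r=y_s$ as well as when $y_r+y_s=2y_0$) they can coincide along a segment, so the lemma as literally stated is false for such pairs; this does not damage the paper, since the upper-envelope computation in Lemma \ref{envelope} is insensitive to overlapping arcs, but a complete write-up should state the assumption. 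Second, you prove ``exactly one'' for $t$ ranging over all of $\mathbb{R}$; over the actual domain, the segment $C=[a,b]$, the unique crossing $T$ may fall outside $[a,b]$, so the honest statement there is ``at most one'' --- which is all that the envelope bound in Lemma \ref{envelope} actually needs.
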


We consider the functions $F_i$ for all points each color $i\in \{1,2,\ldots,k-2\}$ separately. Let 
$\Gamma(i)$ be the upper envelope of the functions of $F_i$, $i=\{1,2,\ldots,k\}$. 
For each point $\alpha \in C$, if the vertical line drawn at $\alpha$ intersects 
$\Gamma(i)$ at a point $\beta$, and $\beta$ lies on $f(r) \in F_i$, then the point 
$r$ of $color(i)$ is closest to $C_{out}$ centered at $\alpha$ among all points 
of $P$ having $color(i)$ inside $R$. Observe that, each $\Gamma(i)$ is one of the 
forms listed in Figure \ref{gamma}, and their corresponding symmetric forms. This 
follows from the fact that the upper envelope $\Gamma(i)$ of the curves in $F(i)$
consists of at most one line segments of slopes $\{-1,0,1\}$. 

\begin{figure}[h]
 \centerline{\includegraphics[scale=0.5]{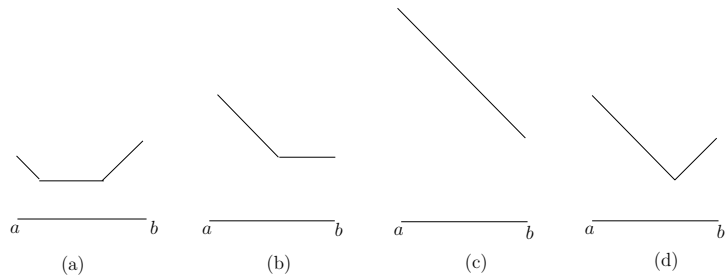}} 
 \caption{Nature of $\Gamma(i)$}
  \label{gamma}
\end{figure}

Now, to compute the furthest points of each color $i$ from $C$, we need to consider the lower envelope 
$\Gamma$ of $\Gamma(i), i \in \{1,2,\ldots,k\} \setminus \{color(p),color(q)\}$.
Again to minimize the width of the annulus, we choose the point on $\Gamma$ having 
maximum vertical distance from $C$. Its projection $c$ on $C$ is the center of the 
$CSSA$ with $p,q$ on the two parallel sides of its $C_{out}$ (see Fig. \ref{sq_algo}(b)).

\begin{lemma}\label{envelope}
Given the rectangle $R$, the set of points within $R$, we 
can find the annulus-center $c$ in $O(n\log n)$ time.
\end{lemma}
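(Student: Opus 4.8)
The plan is to exploit the two-level envelope structure already set up: an inner (upper) envelope per color to extract the farthest same-colored point, and an outer (lower) envelope across colors to extract the closest of these farthest points, which is exactly the point $r$ that defines $C_{in}$ by Lemma~\ref{outer-inner}. Parametrizing the center by its $x$-coordinate along $C$ and writing each distance curve as $f(r)(x)=\max(|x-x(r)|,\,|y_0-y(r)|)$, where $y_0$ is the (fixed) ordinate of $C$, every $f(r)$ is a convex, flat-bottomed ``V'' with slopes in $\{-1,0,1\}$; hence each $\Gamma(i)$, being the upper envelope (pointwise maximum) of convex curves, is again convex and, as already noted, consists of at most three segments of slopes $-1,0,1$.

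First I would compute all the per-color upper envelopes $\Gamma(i)$. Since by Lemma~\ref{jjj} any two curves $f(r),f(s)$ cross exactly once, the curves of a fixed color behave like pseudolines, so $\Gamma(i)$ has linear complexity and is obtained in $O(|F_i|\log|F_i|)$ time; summing over colors and using $\sum_i |F_i|=O(n)$ gives $O(n\log n)$ for this phase.

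Next I would build the lower envelope $\Gamma$ of the $k-2$ tent functions $\Gamma(i)$ and read off the required center. The key structural claim---and the step I expect to be the main obstacle---is that any two tent functions $\Gamma(i),\Gamma(j)$ intersect at most twice. I would prove this by a slope argument: since both functions are convex with nondecreasing slope sequence drawn from $\{-1,0,1\}$, the difference $\Gamma(i)-\Gamma(j)$ is piecewise linear and, depending on the interleaving of the (at most four) breakpoints, its slope sequence is either monotone or single-peaked (unimodal), so $\Gamma(i)-\Gamma(j)$ changes sign at most twice. With pairwise crossings bounded by $2$, the lower envelope $\Gamma$ is a Davenport--Schinzel sequence of order $2$, hence has complexity $\lambda_2(k)=O(k)$ and can be assembled by divide-and-conquer in $O(k\log k)$ time.

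Finally, since to minimize the width we keep $C_{out}$ (of radius $\frac{\delta}{2}$) fixed and maximize the inner radius, I would scan the $O(k)$ pieces of $\Gamma$ and pick the point of maximum height (vertical distance from $C$); its projection onto $C$ is the sought annulus-center $c$. Because $k\le n$, all three phases run within $O(n\log n)$, matching the claim. The only delicate point beyond the crossing bound is verifying that truncating each $\Gamma(i)$ to the interval $C$ (and the degenerate forms of Figure~\ref{gamma}) does not increase the number of pairwise crossings, which is immediate since restriction to a subinterval cannot create new intersections.
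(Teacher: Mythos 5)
Your proposal is correct and follows essentially the same route as the paper's proof: compute each per-color upper envelope $\Gamma(i)$ in $O(n_i\log n_i)$ time using Lemma~\ref{jjj}, form the lower envelope of the resulting $k-2$ constant-complexity functions as an order-$2$ Davenport--Schinzel structure in $O(k\log k)$ time, and take the highest point of $\Gamma$, giving $O(n\log n)$ overall. One small note on the pairwise crossing bound (which the paper asserts without proof and you attempt to prove): your claim that the slope sequence of $\Gamma(i)-\Gamma(j)$ is always monotone or unimodal fails when the breakpoints interleave (the slope sequence can be $0,1,0,1,0$), but your conclusion survives because in every interleaving the slopes of the difference are all nonnegative, all nonpositive, or change sign exactly once, which still limits the difference to at most two sign changes.
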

 \begin{proof}
  Let $color(p)=k$, $color(q)=k-1$, and there are $n_i$ points of color $i=\{1,2,
  \ldots,k-2\}$. By Lemma \ref{jjj}, the computation of upper envelope $\Gamma(i)$ 
  takes $O(n_i\log n_i)$ time \cite{sharir1995davenport}. Thus, the total time 
  for computing $\Gamma(i)$ for all $i=1,2,\ldots, k-2$ is $O(n\log n)$. Now, we have 
  $k$ totally defined functions $\Gamma(i)$ in the interval domain $C=[a,b]$, where 
  each pair of functions $\Gamma(i)$ and $\Gamma(j)$ intersect in at most two points. 
  The size of the lower envelope $\Gamma$ of the functions $\Gamma(i), i=1,2,\ldots,
  k-2$ is $\lambda_2(k-2)$ (the Davenport Schinzel sequence of order 2), and it can 
  be computed in $O(\lambda_2(k-2)\log k)$ time \cite{sharir1995davenport}. Finally, 
  we compute the point having maximum $y$-coordinate on $\Gamma$ by inspecting all 
  its vertices. Since $\lambda_2(k-2)=2k-5$, the total time required for processing 
  $\Gamma(i), i=1,2,\ldots,k-2$ is $O(k\log k)$. Thus, the total time complexity is 
  dominated by computing $\Gamma(i), i=1,2,\ldots,k-2$, which is $O(n\log n)$. \qed 
 \end{proof}

 \begin{lemma}
Given a set of $n$ points, each assigned with one of the $k$ given colors, 
the minimum width $CSSA$ can be computed in $O(n^3\log n)$ time.

\end{lemma}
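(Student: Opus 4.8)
The plan is to combine the structural characterization of the first lemma with the per-configuration subroutine of Lemma \ref{envelope}, and then simply iterate over all candidate configurations. By the first lemma, in an optimal $CSSA$ either $C_{out}$ or $C_{in}$ carries two points of distinct color on its two mutually parallel boundaries. Hence it suffices to enumerate every bicolored pair $p,q \in P$, let $\{h_p,h_q\}$ (and separately $\{v_p,v_q\}$) be the two parallel boundaries they induce, and for each such choice compute the best annulus consistent with it. There are $O(n^2)$ bicolored pairs, and two axis orientations per pair, which is still $O(n^2)$ configurations overall.

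For a fixed pair $p,q$ assumed to lie on the two parallel boundaries of $C_{out}$, I would first discard the pair in $O(1)$ time whenever $|x(p)-x(q)|>\delta$ with $\delta=|y(p)-y(q)|$, since then $p,q$ cannot define $C_{out}$. Otherwise I build the rectangle $R$ bounding all feasible $C_{out}$ as described above, extract $P'=P\cap R$, and test in a single linear scan whether $P'$ is color-spanning; if not, the pair yields no $CSSA$ and is dropped. When $P'$ is color-spanning I invoke Lemma \ref{envelope}, which returns in $O(n\log n)$ time the center $c$ minimizing the annulus width for this configuration; correctness of this single call is guaranteed by Lemma \ref{outer-inner}, which pins down $C_{in}$ once $c$ is fixed, together with the envelope-complexity bound of Lemma \ref{jjj}. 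The symmetric case, where $p,q$ lie on the two parallel boundaries of $C_{in}$, is handled by the same machinery with the roles of the inner and outer square exchanged (upper and lower envelopes swapped), again in $O(n\log n)$ per pair.

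Taking the minimum width reported over all $O(n^2)$ configurations gives the global optimum, and the total running time is $O(n^2)\cdot O(n\log n)=O(n^3\log n)$, as claimed. Completeness of the search follows because the first lemma certifies that the optimal annulus is captured by at least one enumerated bicolored pair on parallel boundaries, while optimality within each configuration follows from Lemmas \ref{outer-inner}, \ref{jjj} and \ref{envelope}.

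The main obstacle I anticipate is not the time bound, which is a direct product of the enumeration size and the cost of Lemma \ref{envelope}, but rather arguing cleanly that the two cases ($p,q$ on $C_{out}$ versus on $C_{in}$) together with the two axis orientations exhaust all optimal configurations without missing the degenerate boundary situations where a defining point coincides with a corner of the square. I would also take care that the symmetric ``inner'' subroutine genuinely mirrors Lemma \ref{envelope}: there the outer square must be grown to just capture, for every remaining color, its point nearest to $c$ that still lies outside $C_{in}$, so the relevant envelope and the extremal-vertex selection are dual to the outer case; I would verify that Lemmas \ref{jjj} and \ref{envelope} transfer verbatim under this duality.
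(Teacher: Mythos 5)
Your proposal is correct and follows essentially the same route as the paper: enumerate the $O(n^2)$ bi-colored pairs, build the rectangle $R$ and test color-spanning by a linear scan, and invoke Lemma \ref{envelope} per pair at $O(n\log n)$ cost, with the $C_{in}$-defining case handled symmetrically. Your extra care about the vertical orientation and corner degeneracies is a finer level of detail than the paper provides, but it does not change the argument or the bound.
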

\begin{proof}
 We consider $O(n^2)$ pairs of bi-colored points. For each pair of such points 
 $(p,q)$, we can construct the rectangle $R$ in $O(1)$ time. With
 the help of a linear search we can verify whether $R$ is color-spanning. If $R$ is color-spanning, then using 
 lemma \ref{envelope}, we can determine the optimum square
 annulus contained in $R$ in $O(n\log n)$ time.
 Thus the time complexity result follows. \qed 
\end{proof}

We can further improve the time complexity in the following way using a total $O(k)$ amount of extra space.
We choose a pair of points $p,q \in P$ and let $|x(p)-x(q)| >\delta$, where $\delta=y(p)-y(q)$.
Now, we can construct the rectangle $R$ and verify its color-spanning property as was done earlier. If it is color spanning,
then for each color we maintain the upper envelope $\Gamma_i$. As mentioned earlier, $\Gamma_i$'s
are of constant complexity. Thus, each $\Gamma_i$ can be maintained using $O(1)$ space. For each point 
$r\in P$, if $r$ is inside $R$, then we consider its distance curve $f(r)$ from $C$, and update $\Gamma_i$ for 
$i=color(r)$ considering the intersection of $f(r)$ and the existing $\Gamma_i$. This can be done in $O(1)$ time. Thus considering all points
in $P$, we can construct the $\Gamma_i, i=\{1,2,\ldots,k-2\}$ in $O(n)$ time using $O(k)$ space. At the end, we consider
the lower envelope $\Gamma$ of these $\Gamma_i$'s and return the maximum point. As in Lemma \ref{envelope}, this can be done in $O(k\log k)$
time using $O(k)$ space. Considering $O(n^2)$ pairs of bi-colored points, we have the following result:
\begin{theorem}
 Given a set of $n$ points, each assigned with one of the $k$ given colors, 
the minimum width $CSSA$ can be computed in $O(n^3+n^2k\log k)$ time using $O(k)$ extra space. 
\end{theorem}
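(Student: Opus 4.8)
The plan is to establish the $O(n^3 + n^2 k \log k)$ bound by refining the naive $O(n^3 \log n)$ algorithm from the previous lemma, where the dominant cost was the per-pair computation of the upper envelopes $\Gamma_i$. The key observation driving the improvement is that we do not actually need to sort the distance curves $f(r)$ within each color class to build $\Gamma_i$: since each $\Gamma_i$ has constant complexity (at most one segment of slope in $\{-1,0,1\}$, per the discussion preceding Lemma \ref{envelope}), it can be maintained incrementally by a single linear scan over the points, merging each new distance curve $f(r)$ into the current $\Gamma_i$ in $O(1)$ time. This replaces the $O(n \log n)$ envelope construction with an $O(n)$ construction for each fixed bicolored pair $(p,q)$.

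First I would fix a bicolored pair $p,q \in P$ and, in $O(1)$ time, construct the bounding rectangle $R$ together with the horizontal center interval $C = [a,b]$ exactly as described before Lemma \ref{jjj}. Next I would perform one linear scan over all $n$ points: for each point $r$, test membership in $R$, discard it if $color(r) \in \{color(p), color(q)\}$, and otherwise compute its distance curve $f(r)$ from $C$ and fold it into $\Gamma_{color(r)}$ via the constant-time merge. During this same scan I can verify the color-spanning property of $R$ (by tracking which colors have been seen), so the entire per-pair preprocessing is $O(n)$ time while storing only the $O(k)$ constant-size envelopes $\Gamma_1,\ldots,\Gamma_{k-2}$. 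Then, invoking the envelope analysis of Lemma \ref{envelope}, I would compute the lower envelope $\Gamma$ of these $k-2$ functions and extract its highest point; since $\lambda_2(k-2) = 2k-5$ and each pair of the $\Gamma_i$ meets in at most two points, this step costs $O(k \log k)$ time and $O(k)$ space.

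Summing the two per-pair costs gives $O(n + k\log k)$ work for each of the $O(n^2)$ bicolored pairs, which multiplies out to $O(n^3 + n^2 k \log k)$ total time, with only the $O(k)$ reusable workspace carried across pairs. The space bound is immediate because the $\Gamma_i$ arrays and the lower envelope $\Gamma$ are all freed and rebuilt between successive pairs.

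The main obstacle—and the part that must be argued carefully rather than asserted—is the claim that each $\Gamma_i$ admits a genuine $O(1)$-time incremental update. This rests on two facts that should be stated explicitly: that the upper envelope of any family of distance curves of a single color is a single segment of slope in $\{-1,0,1\}$ (the constant-complexity claim from the text, which follows from Lemma \ref{jjj} guaranteeing that any two such curves cross exactly once), and that merging a new $f(r)$ into such a constant-size envelope therefore amounts to a bounded number of segment intersections. Once this is granted, the rest is a straightforward accounting of the two cost terms across all pairs, so I expect the write-up to spend most of its effort justifying the constant-time envelope maintenance and then collapse the arithmetic in a sentence.
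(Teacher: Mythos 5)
Your proposal is correct and takes essentially the same approach as the paper's own proof: for each of the $O(n^2)$ bicolored pairs, a single $O(n)$ scan builds the constant-complexity envelopes $\Gamma_i$ by $O(1)$-time incremental merging (with color-spanning verification folded into the same scan), followed by the $O(k\log k)$ lower-envelope and maximum-point step of Lemma \ref{envelope}, giving $O(n^3+n^2k\log k)$ total time and $O(k)$ reusable space. The only difference is one of emphasis: you spell out the justification for the constant-time merge (constant complexity of each $\Gamma_i$ via the single-crossing property of Lemma \ref{jjj}), which the paper asserts more briefly.
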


%

\section{Axis Parallel Color-spanning Rectangular Annulus} 
An {\em axis-parallel color-spanning rectangular annulus $(CSRA)$} is a 
color-spanning annulus $\cal A$ bounded by two co-centric axis-parallel 
rectangles $C_{out}$ and $C_{in}$. The top (resp. bottom, left, right) boundaries
of $C_{in}$ and $C_{out}$ are said to be {\it similar sides} of these two rectangles.
The width of $CSRA$ is half of the difference of lengths (widths) of $C_{out}$ 
and $C_{in}$ (see Fig. \ref{fig:axis_parallel_rect}).  
\begin{figure}[h]
 \centerline{\includegraphics[scale=0.4]{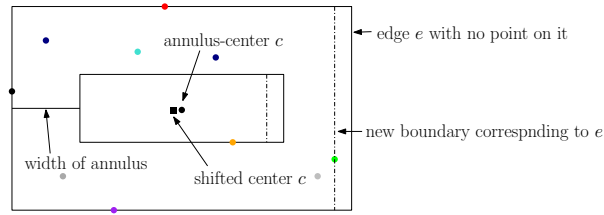}}
 
 \caption{Rectangular Annulus} 
  \label{fig:axis_parallel_rect}

\end{figure}
\begin{lemma}\label{four_points}
The necessary and sufficient condition for $\cal A$ to be a minimum width 
$CSRA$ is that (i) all the four edges 
of $C_{out}$ must contain at least one point, and at least one edge of $C_{in}$
must contain at least one point, or (ii) all the four edges 
of $C_{in}$ must contain at least one point, and at least one edge of $C_{out}$
must contain at least one point. In both the cases, these five points of $P$ are of different colors.
\end{lemma}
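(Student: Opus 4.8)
The plan is to treat the annulus as determined by five real parameters---the four sides of $C_{out}$, namely $x_L,x_R,y_B,y_T$, together with the common (uniform) gap $w$, so that $C_{in}=[x_L+w,x_R-w]\times[y_B+w,y_T-w]$---and to ask when the objective $w$ is locally irreducible. First I would record, for each of the eight edges, the linear inequality that a point lying on it imposes on an admissible perturbation $(\delta x_L,\delta x_R,\delta y_B,\delta y_T,\delta w)$ that keeps the annulus color-spanning: a point on the outer edge $x=x_L$ must stay inside $C_{out}$, giving $\delta x_L\le 0$, while a point on the parallel inner edge $x=x_L+w$ must stay out of the hole, giving $\delta x_L+\delta w\ge 0$ (and symmetrically in the other three directions). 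The decisive observation is that whenever some direction carries a point on \emph{both} its outer and its inner edge---a matched pair, say on the left---these two inequalities combine to $\delta w\ge-\delta x_L\ge 0$, so the width cannot be decreased. This already yields the sufficiency half: a configuration of type (i) or (ii) always contains such a matched pair (in case (i) the single inner edge is paired with the corresponding one of the four occupied outer edges, and symmetrically in case (ii)), hence it is a genuine candidate from which $w$ cannot be locally reduced.

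For necessity I would run the argument in reverse. If $\cal A$ has minimum width, no admissible perturbation lowers $w$, and I would first show this forces some direction to carry a matched pair: otherwise, in every direction at least one of the two parallel edges is point-free, and then decreasing $w$ by a small $\epsilon$ (which tends to enlarge the hole by $\epsilon$ on every side) can be made color-preserving---on each direction whose inner edge holds a critical point we keep that edge fixed by translating the free parallel outer edge inward by $\epsilon$---contradicting optimality. Having secured one matched pair, say on the left so that $C_{out}$ and $C_{in}$ are both touched there, I would then \emph{tighten} the remaining edges at constant width: each still-free outer edge is slid inward until it first meets a point, and each still-free inner edge is slid outward until a point is about to fall into the hole. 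By Observation~\ref{distinct-color} every such slide preserves color-spanning and leaves $w$ unchanged, so it does not destroy optimality.

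The crux is to show that this tightening always terminates in one of the two advertised patterns, and I expect the coupling imposed by the uniform width to be the main obstacle: moving an outer edge inward drags the parallel inner edge with it, so the four directions cannot be tightened independently. The plan is to establish a dichotomy: if every outer edge can be pinned by inward sliding we land in case (i) (four outer contacts plus the inner contact of the matched pair), whereas if some outer edge cannot be pinned---it slides freely until the hole would otherwise degenerate---then the binding constraints all migrate to $C_{in}$, the symmetric outward sliding pins all four inner edges, and we obtain case (ii), with the lone outer contact supplied by the matched pair. Degenerate configurations, in which an inner edge shrinks to zero length, would have to be excluded or handled separately at this step.

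Finally, for the distinct-color claim I would again invoke Observation~\ref{distinct-color}: the interior of $\cal A$ contains no point sharing a color with a boundary-defining point. If two of the five contact points had the same color, one of them would be redundant for color-spanning, its edge could be freed, and the corresponding slide would strictly decrease $w$, contradicting minimality. Hence the five contacts carry five distinct colors, which completes both directions.
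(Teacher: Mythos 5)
Your overall strategy is the same family of argument as the paper's: the paper also proceeds by width-preserving slides (a point-free outer edge is moved toward the center \emph{together with its similar inner edge} until first contact) and by a uniform outward expansion of $C_{in}$, which strictly decreases the width, when no inner edge is touched; your matched-pair inequality and your step for forcing a matched pair are a cleaner, more systematic packaging of exactly these two moves, and they even cover configurations (e.g., inner contacts present but never opposite an outer contact) that the paper's case analysis does not spell out. One remark on the part you flag as the ``crux'': it is not actually problematic, but you do leave it unproven. At constant width, the inward slide of a free outer edge can never expel a point (you stop at first contact, and the dragged inner edge only absorbs points from the hole), and it must terminate at a point, because the two points of the already-secured matched pair lie in the closed annulus and in the path of every such sweep. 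Hence the tightening always ends in configuration (i); the case-(ii) branch of your dichotomy, with its degenerate subcases, never needs to be entered.

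The genuine gap is your argument for the five-distinct-colors clause. If two \emph{outer} contacts share a color, freeing one of those edges does not yield any decrease of $w$: every move available at that moment is width-preserving, and indeed your own matched-pair inequality shows that while the matched pair is intact no admissible perturbation satisfies $\delta w<0$, so there is nothing to contradict. An optimal annulus whose top and right outer contacts share a color is perfectly consistent with minimality; the correct conclusion is not that this situation is impossible, but that the contact can be \emph{replaced}. Concretely, during each constant-width slide one must skip over any point whose color is already represented among the current contacts (such a point may exit the annulus, and color-spanning survives precisely because its color is duplicated) and stop only at the first point of a \emph{new} color --- this is how the paper phrases its slide (``until $e$ hits a point of $P$ having color different from the colors of all three points on $C_{out}$''), building distinctness into the construction rather than deriving it afterwards by contradiction. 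Your contradiction argument is valid only for the inner contact: if \emph{its} color is duplicated, freeing it permits a uniform expansion of $C_{in}$, which is a genuine decrease of $w$ and does contradict minimality.
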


\begin{proof}
{\em Part (i)}: For a contradiction let us assume that three edges of $C_{out}$ contains three 
points, one edge of $C_{in}$ contains a point, and the colors of these 
four points are different satisfying Observation \ref{distinct-color}. Let $e$ be the edge 
of $C_{out}$ containing no point. We start moving the edge $e$ of $C_{out}$ 
containing no point towards the annulus-center $c$ in the self-parallel manner. 
To maintain the same width of $\cal A$ we need to move the similar side of
$C_{in}$ of $e$ towards $c$ simultaneously with $e$ until $e$ hits a point 
of $P$ having color different from the colors of all three points on 
$C_{out}$ (see Fig. \ref{fig:axis_parallel_rect}). Note that, all the points lying 
in the annular region of $\cal A$ remains in the annular region of $\cal A'$ formed 
with the new positions of $e$ (surely, a few more points may enter in the annular region).

If none of the edges of $C_{in}$ contains a point, then we can reduce the width of the annulus by 
moving the four edges of $C_{in}$ away from the annulus-center $c$ in 
self-parallel manner until at least one edge of $C_{in}$ hits a point inside the annular 
region $\cal A$ satisfying Observation \ref{distinct-color}. 

{\em Part (ii)}: Similar proof holds to show that the width of an annulus defined by four 
points on the four edges of $C_{in}$ and one edge of $C_{out}$ containing a 
point. Thus, the lemma follows. 
\qed
\end{proof}
Lemma \ref{four_points} leads to the following result;  
\begin{lemma} \label{width_two_points}
 In an optimum $CSRA$ a pair of similar sides (of $C_{in}$ 
and $C_{out}$) will contain two points of different colors.
\end{lemma}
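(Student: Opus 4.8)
The plan is to obtain this statement as a direct corollary of Lemma~\ref{four_points}, which already pins down the edge--occupancy of an optimum $CSRA$. By that lemma every optimum $CSRA$ falls into one of two exhaustive cases: either (i) all four edges of $C_{out}$ carry a point while at least one edge of $C_{in}$ carries a point, or (ii) the symmetric situation with the roles of $C_{in}$ and $C_{out}$ interchanged. In both cases the five points so obtained are of pairwise distinct colors. So I would simply exhibit, in each case, a similar-side pair that is occupied on both of its sides by two of these five distinctly colored points.

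First I would treat case (i). Since at least one edge of $C_{in}$ contains a point, fix such an edge; by relabeling the orientation we may assume without loss of generality that it is the top edge of $C_{in}$, carrying a point $r$. Because \emph{all four} edges of $C_{out}$ are occupied, the top edge of $C_{out}$ also carries a point, say $s$. The two edges carrying $r$ and $s$ are exactly the pair of similar (top) sides of $C_{in}$ and $C_{out}$, so this similar-side pair contains the two points $r$ and $s$, one on each rectangle. Since both $r$ and $s$ belong to the set of five defining points furnished by Lemma~\ref{four_points}, and those points have pairwise distinct colors, we get $color(r)\neq color(s)$. Hence the top similar-side pair contains two points of different colors, as required. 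Case (ii) is settled by the identical argument after swapping $C_{in}$ and $C_{out}$: here all four edges of $C_{in}$ are occupied and some edge of $C_{out}$ carries a point, so the corresponding similar-side pair again carries two distinctly colored points. The two cases cover all optimal configurations, so the lemma follows.

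I do not expect a genuine difficulty here, as the claim is essentially a re-reading of Lemma~\ref{four_points}; the only point deserving care is that the two exhibited points are genuinely distinct and differently colored. Distinctness is automatic for an annulus of positive width, since no point of $P$ can lie simultaneously on $C_{in}$ and $C_{out}$ (they lie on the two distinct, nested rectangles), and the color inequality is handed to us directly by the distinct-color guarantee of Lemma~\ref{four_points} (itself resting on Observation~\ref{distinct-color}). Thus the main obstacle is merely bookkeeping the orientation/relabeling so that "the occupied edge of the sparse rectangle" and "the corresponding side of the fully occupied rectangle" are correctly identified as a similar-side pair.
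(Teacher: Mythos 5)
Your proof is correct and takes essentially the same route as the paper: the paper gives no separate argument for this lemma, simply noting that it follows from Lemma~\ref{four_points}, which is exactly the corollary-style derivation you spell out (the fully occupied rectangle guarantees a point on the side similar to the occupied side of the other rectangle, and the pairwise-distinct-color clause of Lemma~\ref{four_points} yields the color inequality). Your added care about distinctness of the two exhibited points is a harmless elaboration of what the paper leaves implicit.
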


We now discuss the algorithm for rectangular annulus based on the Lemma 
\ref{width_two_points}. Assume that the points in $P$ are available in
sorted order with respect to $x$- and $y$-coordinates in two arrays
$P_x$ and $P_y$ respectively. Consider a pair of points $p,q \in P$ of different 
colors. We test whether a $CSRA$ is possible with $p$ and $q$ on the top 
boundaries of $C_{out}$ and $C_{in}$ respectively.
The width of such a $CSRA$, if exists, will be $\delta=y(p)-y(q)$. Similar method
is adopted to find the existence of a $CSRA$ with $p,q$ in the bottom, left
or right boundaries of $C_{in}$ and $C_{out}$.

\begin{figure}[h]
 \centerline{\includegraphics[scale=0.4]{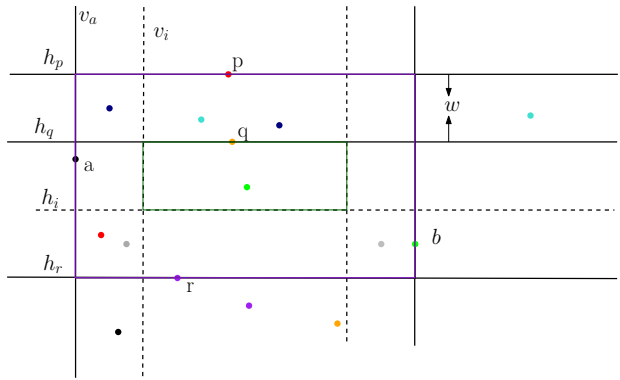}}

\caption{Illustration of rectangular annulus construction} 
  \label{rect_algo}

\end{figure}
Let us consider $h_p$ and $h_q$. In a linear scan we can find a point $s \in 
P_y$ such that the horizontal strip defined by $h_p$ and $h_s$ is color spanning. 
If $y(p)-y(s) \leq 2\delta$, then an annulus of width $\delta$ is trivially
obtained with $C_{in}=\emptyset$. Thus, we assume that $y(p)-y(s) > 2\delta$. 

Observe that, for all points $r \in P$ with $y(r) \leq y(s)$, the horizontal 
strip $H$ defined by $h_p$ and $h_r$ will be color spanning. We now compute 
a minimum width $CSRA$ inside the strip $H$ with points $p$ and $r$ lying 
respectively on the top and bottom boundary of $C_{out}$, and $q$ lying on 
the top boundary of $C_{in}$.

Fix a point $a\in P_y$ with $x(a) < \min(x(p),x(q),x(r))$ inside the strip $H$ 
having color different from that of  $p$, $q$ and $r$. In a linear 
scan in the array $P_x$, we can get a point $b_x$ such that the rectangle 
defined by the lines $h_p,h_r,v_a,v_b$ is color-spanning. Thus for all points $c 
\in P_x$ satisfying $y(c) \in [y(p),y(r)]$ and $x(c) \geq x(b)$, the rectangle 
defined by the lines $h_p,h_r,v_a,v_c$ will be color-spanning. In a line 
sweep inside the strip $H$, we choose those points $c$ with $x(c) \geq x(b)$ 
and having color different from that of $p,q,r,a$. Now, the rectangle $R$ formed 
by $h_p,h_r,v_a,v_c$ defines $C_{out}$. The corresponding $C_{in}$ will be 
co-centric with $C_{out}$, its length and width will be $(x(c)-x(a)) - 2\delta$
(see Fig. \ref{rect_algo}). We can test whether the created annulus is 
color-spanning or not by inspecting the points in $P$ in another linear 
scan\footnote{This can also be tested in poly-logarithmic time by 
maintaining $k$ range trees with points corresponding to $k$ colors separately 
and performing emptiness queries for four axis-parallel rectangles in all 
those range trees; this will increase in the space complexity to $O(n\log n)$.}. 
Thus, we have the following result:
 
\begin{lemma}\label{l}
Given a set of $n$ points, each assigned with one of the $k$ given colors, 
the minimum width $CSRA$ can be computed in $O(n^6)$ time using $O(n)$ space. 
\end{lemma}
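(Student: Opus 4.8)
The plan is to convert the constructive description above into a nested enumeration, multiply the loop bounds, and then certify that this finite search always contains a minimum-width annulus by invoking the structural lemmas. First, by Lemma~\ref{width_two_points}, I would fix a pair of similar sides carrying two bichromatic points: choosing the ordered pair $(p,q)$ of distinct-colored points to lie on the top edges of $C_{out}$ and $C_{in}$ costs $O(n^2)$, and the three symmetric placements on the bottom, left and right edges only add a constant factor. This choice pins down the annulus width $\delta = y(p)-y(q)$.

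Next, by part~(i) of Lemma~\ref{four_points}, an optimum realized on these top edges has each of the three remaining edges of $C_{out}$ touched by a point of a new color. I would therefore add three loops: over the bottom-edge point $r$, the left-edge point $a$, and the right-edge point $c$, each ranging over the $O(n)$ candidates that survive the color-spanning feasibility filters (the strip $H$ and the $x$-threshold $x(b)$) described above. Together $p,q,r,a,c$ determine $C_{out}$ as the rectangle bounded by $h_p,h_r,v_a,v_c$, and then $C_{in}$ is forced: it is co-centric with $C_{out}$ with side lengths smaller by $2\delta$, which stay positive because $y(p)-y(r)\ge y(p)-y(s) > 2\delta$.

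For each of the $O(n^2)\cdot O(n)\cdot O(n)\cdot O(n)=O(n^5)$ completed configurations I would run a single linear scan over $P$ to confirm, via Observation~\ref{distinct-color}, that the annulus is color-spanning and to record its width, keeping a running minimum; each scan costs $O(n)$, so the total is $O(n^6)$. Since the only persistent data are the sorted arrays $P_x,P_y$ plus $O(1)$ bookkeeping per scan, the space stays $O(n)$. Part~(ii) of Lemma~\ref{four_points}, where all four edges of $C_{in}$ are touched, is absorbed by the symmetric enumeration in which $q$ and the three new points lie on $C_{in}$ while $p$ supplies the single touched edge of $C_{out}$, at the same asymptotic cost.

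The step I expect to be the real obstacle is not the arithmetic of multiplying loop bounds but the correctness guarantee. I must argue that Lemma~\ref{four_points} genuinely certifies that one configuration in this enumeration attains the optimum, and, more delicately, that the feasibility filters never prune a point that a minimum-width annulus needs. In particular I would verify that restricting $a$ to the left of $\min(x(p),x(q),x(r))$ and $c$ to $x(c)\ge x(b)$ preserves at least one optimal placement, so that no valid $CSRA$ can escape the search.
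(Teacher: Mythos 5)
Your proposal is correct and follows essentially the same route as the paper: fix the bichromatic pair $(p,q)$ on similar sides ($O(n^2)$ choices, fixing $\delta$), then run three nested $O(n)$ loops over $r$, $a$, $c$ to determine $C_{out}$ (and hence $C_{in}$), and spend a final $O(n)$ linear scan per configuration to test the color-spanning property, giving $O(n^6)$ time and $O(n)$ space. The correctness concerns you flag (the filters on $a$ and $c$) are sound and in fact resolve easily, since any point on the top or bottom edge of $C_{out}$ must have $x$-coordinate at least $x(a)$, and the rectangle bounded by $v_a,v_c$ must itself be color-spanning for the annulus to be.
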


\begin{proof}
For each pair of points $p,q \in P$, we execute four loops: (i) choosing the 
points $r$ using a horizontal line sweep, (ii) choosing the points $a$ using a 
vertical line sweep inside the strip $H$, (iii) choosing the points $c$ using a vertical 
line sweep from $a$ towards right, and then (iv) testing whether the created 
annulus is color-spanning by testing the points in the rectangle $R=C_{out}$. 
\qed
\end{proof}
We can improve the time complexity by merging the two loops (iii) and (iv) 
mentioned in the proof of Lemma \ref{l} as follows:

For each $a \in H$, we start sweeping two vertical lines $L_1$ and $L_2$ 
simultaneously by using an array $D$ of size $k$, and a scalar variable $Z$. 
$D[i]$ indicates the number of points of color $i$ in the annulus, and $Z$ 
indicates the number of colors absent in the annulus. We initialize $D[i]=0$ 
for all $i=1,2,\ldots,k$, $D[color(p)]=D[color(q)]=D[color(r)]= D[color(a)]=1$, 
and $Z=k-4$. We also initialize the starting position of $L_2$ as the index 
of the rightmost point $d \in P_x$ with $x(d) < x(a)+\delta$ and $y(d)\in y(p,y(r))$. The sweep of 
$L_1$ is implemented by considering the points in $P_x$ in order from the point 
$a$. For each encountered point $c$ of color $i$ (say), if $y(c) \not\in [y(p),
y(r)]$, then $c$ is not feasible to be within the rectangular annulus currently 
under construction. Otherwise, we do the following: 

\begin{figure}[htbp]

\begin{minipage}[b]{0.45\linewidth}
\centering
\centerline{\includegraphics[scale=0.5]{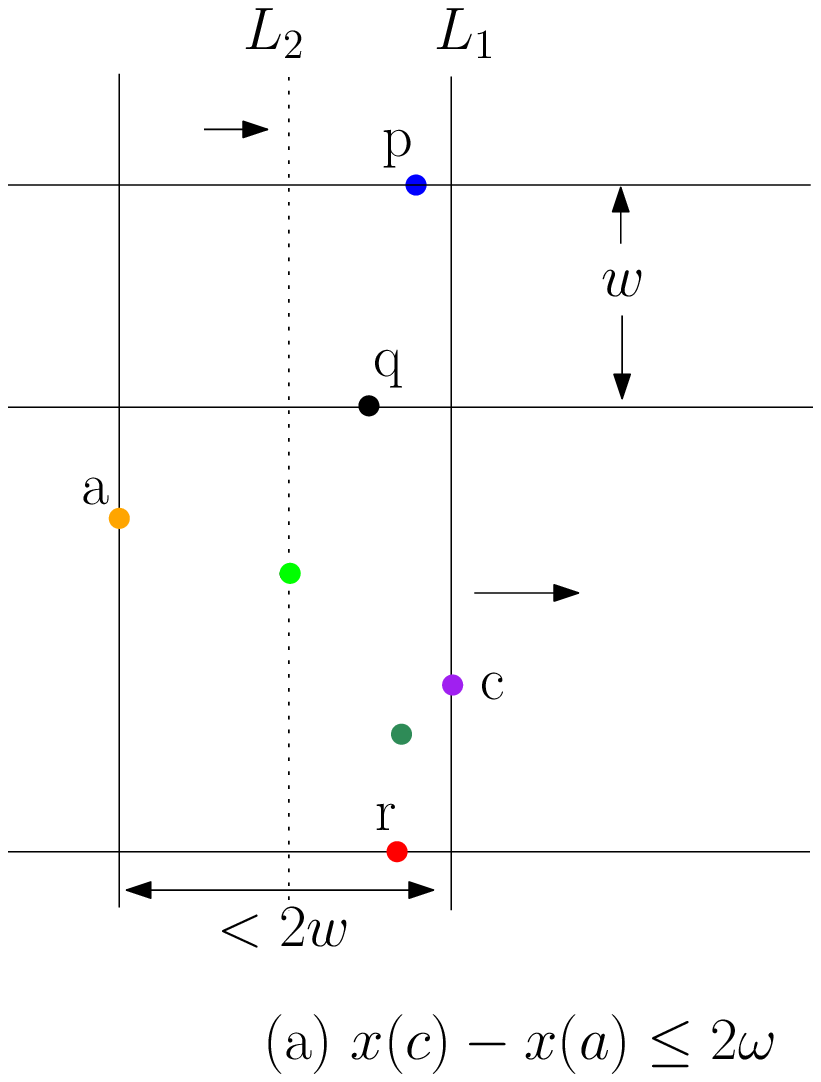}}
\end{minipage} %
\begin{minipage}[b]{0.55\linewidth}
\centering
\centerline{\includegraphics[scale=0.5]{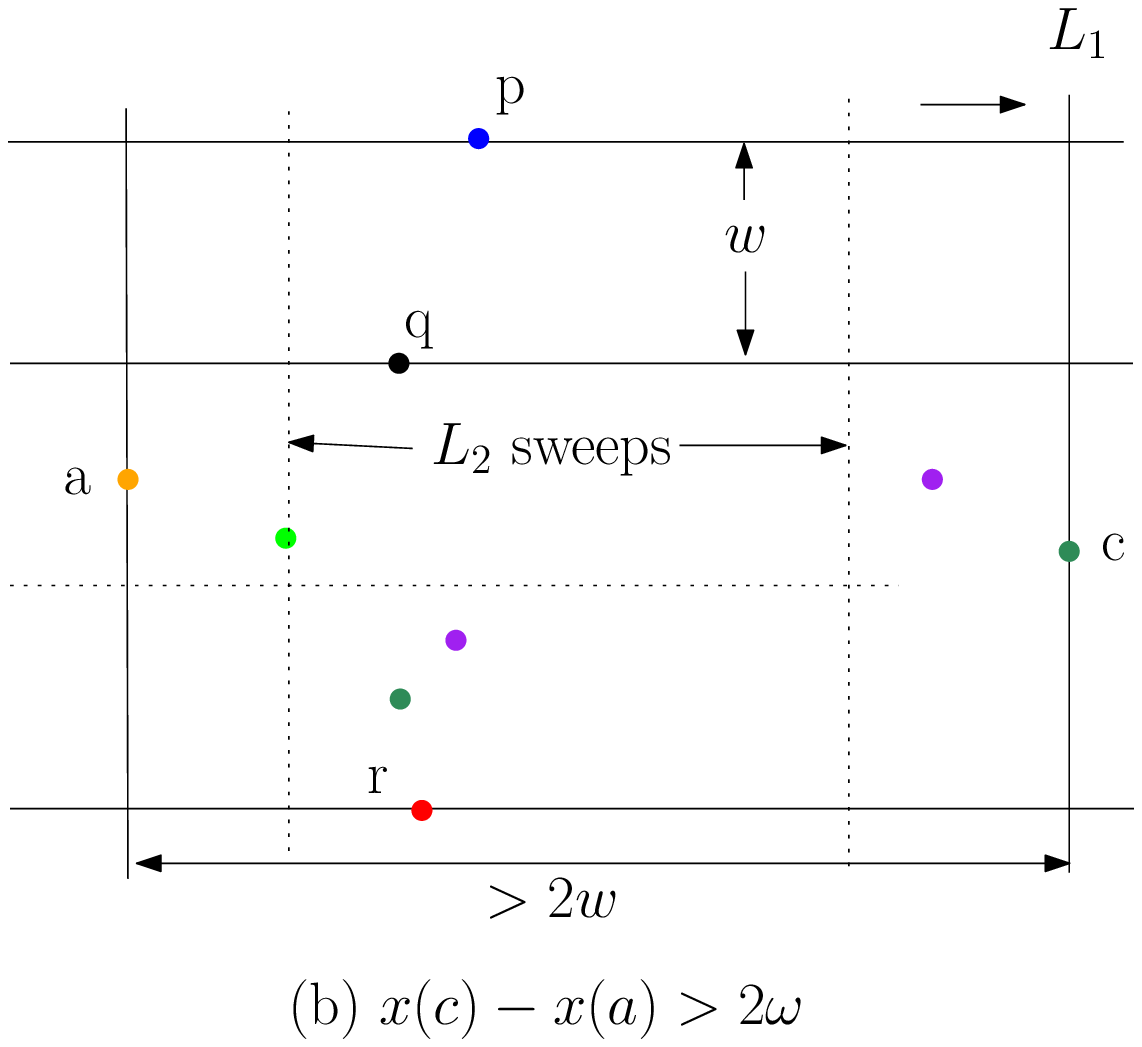}}
\end{minipage} %

\caption{$c$ enters the annulus boundary}
\label{fig:rect_demo}

\end{figure}

\begin{description}
\item[$c$ enters in the annulus: ] we set $D[i]=D[i]+1$. If $D[i]=1$ (a point 
of color $i$ is newly obtained in the annulus), then $Z$ is decremented by 1. 
Now, 
\item[if $x(c)-x(a) \leq 2\delta$] [indicating $C_{in}=\emptyset$ (see Fig. 
\ref{fig:rect_demo}(a))], we need not have to do anything.  
\item[if $x(c)-x(a) > 2\delta$] [indicating $C_{in}\neq \emptyset$ (see Fig. 
\ref{fig:rect_demo}(b))],  
we start sweeping $L_2$ from its present position up to $x(c)-\delta$. For all 
points $d$ encountered by $L_2$, if $y(q)> y(d) > y(r)+\delta$ then $D[j]=D[j]
-1$, where $j=color(d)$. If $D[j]=0$ (indicating no point of color $j$ in the 
annulus) then $Z$ is incremented by 1.  
\item[Check whether the annulus $\cal A$ is color-spanning:] If $Z=0$, then 
report ``success", and stop sweeping of $L_1$.
\end{description} 
Since sweeping of $L_1$ and $L_2$ needs $O(n)$ time, the time complexity of the 
algorithm reduces to $O(n^5)$. Here, it needs to be mentioned that, for each 
point 
$p$, we are testing whether there exists an annulus of width $\delta=y(p)-y(q)$ 
by choosing all possible points $q$ satisfying $y(q) < y(p)$. Thus, for each 
point $p$, we choose $O(n)$ points as $q$ in the worst case.  But, this can be 
improved using a simple binary search technique.
We can find minimum value of $\delta$ (the minimum width of an annulus) with $p$ 
on the top boundary of $C_{out}$ by choosing $q$ using binary search among the 
points in $P_y$ having $y$-coordinate less than $y(p)$ and of different color. 
Thus for each point $p$, we need to choose $O(\log n)$ points as $q$, which 
leads to a total time complexity of $O(n^4\log n)$. Now, this algorithm can also
be implemented in inplace manner using $O(k)$ extra space. For each choice of
$p\in P$ (in decreasing order), we first get $P_y$ in $O(n\log n)$ time (in the 
array $P$ itself) to choose an appropriate member $q\in P$. All the members in 
$P$ below $q$  can serve the role of the point $r\in P$. Now, for $p,q,r \in P$,
we sort $P$ again to get $P_x$ (in the array $P$), and the sweep is performed to
choose $a\in P$ to compute minimum width $CSRA$, if exists. Thus, we need to store
$p,q,r$ (using $O(1)$ variables) to identify the next triple  $p,q',r'\in P$ for processing.
This leads to the following result;

\begin{theorem}
Given a set of $n$ points, each assigned with one of the $k$ given colors, 
the minimum width $CSRA$ can be computed in $O(n^4\log n)$ time using $O(k)$ extra
space. 
\end{theorem}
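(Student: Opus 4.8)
The plan is to consolidate the algorithmic development of this section into a single time/space accounting, since all the structural ingredients are already in place. By Lemma~\ref{width_two_points}, an optimum $CSRA$ has a pair of similar sides carrying two points of distinct color; fixing attention on the case where these are the top edges of $C_{out}$ and $C_{in}$ (the other three orientations being symmetric), I would organize the search as a nested enumeration: a point $p$ on the top edge of $C_{out}$, a point $q$ on the top edge of $C_{in}$ (which fixes the width $\delta=y(p)-y(q)$), a point $r$ on the bottom edge of $C_{out}$ obtained by a horizontal sweep, a point $a$ on the left edge obtained by a vertical sweep inside the strip $H$, and finally the right edge through a point $c$, determined together with the color-spanning test.

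First I would verify the correctness and the $O(n)$ cost of the merged sweep that fuses loops (iii) and (iv) of Lemma~\ref{l}. The key invariant is that, throughout the simultaneous advance of $L_1$ (driving $C_{out}$) and $L_2$ (driving $C_{in}$) through $P_x$, the entry $D[i]$ equals the number of points of color $i$ currently inside the annular region and $Z$ equals the number of colors with $D[i]=0$; reporting ``success'' precisely when $Z=0$ is then correct by Observation~\ref{distinct-color}. Since both lines advance monotonically and each point contributes $O(1)$ updates, this phase costs $O(n)$ for every fixed quadruple $(p,q,r,a)$, which by itself already yields the $O(n^5)$ bound.

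The delicate step, which I expect to be the main obstacle, is justifying the binary search on $q$ that removes the last factor of $n$. For this I would prove a monotonicity property: with $p$ fixed on the top edge of $C_{out}$, the existence of a color-spanning rectangular annulus of width $\delta$ is monotone in $\delta$. The argument is that increasing $\delta$ amounts to shrinking $C_{in}$ while keeping $C_{out}$ (hence $p$) fixed, and shrinking $C_{in}$ only enlarges the annular region, so no color already present can be lost; thus the feasible widths form an up-set whose infimum is exactly what we seek. Consequently the feasible candidates $q$ (those with $y(q)<y(p)$ and $color(q)\neq color(p)$, ordered by $y$-coordinate in $P_y$) form a contiguous suffix, and the minimal feasible $\delta$ is located with $O(\log n)$ feasibility probes rather than $O(n)$ of them.

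Finally I would multiply the loop costs: $O(n)$ choices of $p$, $O(\log n)$ probes for $q$, $O(n)$ positions for $r$, $O(n)$ positions for $a$, and the $O(n)$ merged sweep, for a total of $O(n^4\log n)$. For the space bound, I would observe that the only non-constant auxiliary structure is the counter array $D$ of size $k$, while $Z$, the indices of $p,q,r$, and the positions of $L_1,L_2$ are $O(1)$ scalars; the sorted orders $P_x$ and $P_y$ are realized by sorting the input array $P$ in place and re-sorting when the active triple $(p,q,r)$ changes, whose cost stays below $O(n^4\log n)$ and which leaves no $\Theta(n)$ auxiliary array persisting across iterations. This yields the claimed $O(n^4\log n)$ time and $O(k)$ extra space, establishing the theorem.
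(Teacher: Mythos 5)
Your proposal is correct and follows essentially the same route as the paper: the merged $L_1/L_2$ sweep with the counter array $D$ and deficit counter $Z$ giving $O(n)$ time per quadruple $(p,q,r,a)$, binary search over $q$ among the points of $P_y$ below each fixed $p$, and in-place re-sorting so that only $O(k)$ auxiliary storage persists. Your explicit monotonicity argument (for fixed $p$, feasible widths form an up-set, since increasing $\delta$ with $C_{out}$ fixed only enlarges the annular region) is precisely the justification the paper leaves implicit when it invokes ``a simple binary search technique.''
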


\section{Color Spanning Equilateral Triangular Annulus} 
\begin{wrapfigure}{r}{0.4\textwidth}
\vspace{-0.37in}
\centerline{\includegraphics[scale=0.4]{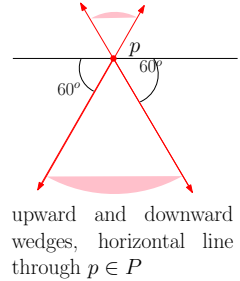}}
 
\caption{Basic constructions}
\vspace{-0.35in} 
\label{basic}
\end{wrapfigure}
A {\em color-spanning equilateral triangular annulus $(CSETA)$} is a 
color-spanning annulus $\cal A$ bounded by two co-centric equilateral triangles
$C_{out}$ and $C_{in}$ where the common center\footnote{the point of intersection of three medians of the equilateral triangle} for both the triangles is 
termed as annulus-center $c$. We assume that the base of $C_{in}$ and $C_{out}$
are parallel to the $x$-axis. Two such types of annulus is possible depending on whether the apex of 
$C_{in}$ and $C_{out}$ is above or below the base of the corresponding triangle. We 
will explain the method assuming that the apex is above the base. The other case can 
be similarly processed. 

%
%
 \begin{figure}[h] 
 \centerline{\includegraphics[scale=0.4]{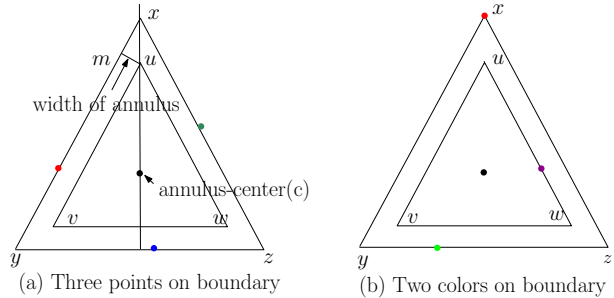}}
 \caption{Two types of boundary points configuration}
  \label{boundary}
\end{figure}
\textbf{\em Wedge:} Consider a pair of half-lines emanating from a point $p$ having 
angles $60^o$ and $120^o$ with the horizontal line at $p$. The point $p$ is said 
to be the vertex of the wedge. The (open) areas created by these two half-lines above and below $p$ are 
termed as the {\em upward wedge} and {\em downward wedge} respectively for the point $p$ (see Fig. \ref{basic}).

\textbf{\em Similar vertex and edge:} Each pair of vertices, one of $C_{in}$ and one of $C_{out}$, that are 
collinear with the annulus-center $c$, are said to be {\em similar vertices}.
Similarly the edges of $C_{in}$ and $C_{out}$ that are 
parallel to each other, are said to be {\em similar edges}.
In Fig. \ref{boundary}(a), $x$ and $u$ are similar vertices, and $\overline{xy}$ 
and $\overline{uv}$ are similar edges.
 
\textbf{\em Width of Annulus:} The width of a triangular annulus is the difference in the 
length of the line segments perpendicular on two similar edges from its 
annulus-center $c$ (see Fig. \ref{boundary}(a)).

\begin{lemma}\label{distance_twice}
The distance among a pair of similar vertices in $C_{in}$ and $C_{out}$ of a 
$CSETA$ $\cal A$ is twice the width of that $CSETA$. 
\end{lemma}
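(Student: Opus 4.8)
The plan is to reduce the claim to the elementary relationship between the circumradius and the inradius of an equilateral triangle. First I would fix an arbitrary pair of similar vertices, say vertex $u$ of $C_{out}$ and its partner vertex $x$ of $C_{in}$. By the definition of similar vertices, $u$, $x$, and the annulus-center $c$ are collinear; moreover, since $C_{in}$ and $C_{out}$ are co-centric, homothetic, and identically oriented, $u$ and $x$ lie on the \emph{same} ray emanating from $c$ (the homothety centered at $c$ that carries $C_{in}$ to $C_{out}$ has positive ratio and sends $x$ to $u$). Consequently the distance between the similar vertices is the difference $d(c,u)-d(c,x)$, not a sum.

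Next I would record the single geometric fact that does the real work: in an equilateral triangle the distance from the centroid to any vertex is exactly twice the distance from the centroid to the edge opposite that vertex. This follows either from the side-length formulas for the circumradius and inradius, or from the observation that the centroid divides each median in the ratio $2:1$ while the median through a vertex is simultaneously the altitude to the opposite side; hence the long part of the median is the circumradius and the short part is the perpendicular (inradius) distance to the opposite edge. Applying this to $C_{out}$ and to $C_{in}$ separately yields $d(c,u)=2r_{out}$ and $d(c,x)=2r_{in}$, where $r_{out}$ and $r_{in}$ denote the perpendicular distances from $c$ to a pair of similar edges of $C_{out}$ and $C_{in}$.

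Finally I would combine these facts with the definition of annulus width. By definition, the width $w$ of the triangular annulus equals $r_{out}-r_{in}$. Therefore the distance between the two similar vertices is $d(c,u)-d(c,x)=2r_{out}-2r_{in}=2(r_{out}-r_{in})=2w$, which is precisely the assertion of the lemma.

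The only step that warrants any care, rather than a genuine obstacle, is ensuring that the perpendicular foot dropped from $c$ onto the edge opposite $u$ lies on the very line $cu$ used in the width definition, so that the factor-$2$ circumradius/inradius identity is measured along the correct direction. This is guaranteed because in an equilateral triangle the median, the angle bisector, and the altitude through any vertex coincide, so the ray $cu$ meets the opposite edge perpendicularly at its midpoint. Once this alignment is noted, the remainder is a one-line subtraction.
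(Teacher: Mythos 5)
Your proof is correct, but it takes a genuinely different route from the paper's. The paper argues without ever invoking the annulus-center: writing $C_{out}=\triangle xyz$ and $C_{in}=\triangle uvw$ with $x,u$ similar vertices, it notes that the line through $\overline{xu}$ bisects both $\angle yxz$ and $\angle vuw$, identifies the width of $\cal A$ with the perpendicular distance $|um|$ from the inner vertex $u$ to the outer edge $\overline{xy}$ (legitimate because similar edges are parallel, though the paper asserts this identification without comment), and then reads off $|xu|=2|um|$ from the single right triangle $\triangle xum$ with $\angle mxu=30^o$. You instead route everything through the center $c$: the circumradius--inradius identity $d(c,\mathrm{vertex})=2\,d(c,\mathrm{opposite~edge})$ applied to each triangle separately, combined with the observation, justified via the positive-ratio homothety, that similar vertices lie on a common ray from $c$, so the vertex-to-vertex distance is the difference $2r_{out}-2r_{in}=2(r_{out}-r_{in})$. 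Both arguments ultimately rest on the same $30^o$--$60^o$--$90^o$ geometry (your $2:1$ centroid ratio is equivalent to the paper's $\sin 30^o=\frac{1}{2}$), but yours works directly from the paper's stated definitions of width and of similar vertices rather than from the figure, makes explicit the ``same ray, hence difference not sum'' point that the paper takes for granted, and generalizes immediately: for co-centric regular $n$-gon annuli the identical subtraction gives vertex distance $=w/\cos(\pi/n)$, of which the triangle case ($n=3$, factor $2$) and the square case ($n=4$, factor $\sqrt{2}$) are instances. The paper's proof is more compact; yours is more self-contained and more portable.
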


\begin{proof}
 Consider Fig \ref{boundary}(a, where $C_{out}=\triangle xyz$ and 
$C_{in}=\triangle uvw$. The line containing $\overline{xu}$ bisects both $\angle yxz$ and $\angle 
vuw$. The width of $\cal A$ is the perpendicular distance from $u$ to the line 
$\overline{xy}$, which is equals to $|um|$. Thus, $\triangle xum$ is right 
angled with $\angle mxu=30^o$, and 
$|xu|=2\times|um|$.
 \qed
\end{proof}

\begin{lemma} \label{lemma_boundary}
The necessary and sufficient condition for $\cal A$ to be a minimum width 
$CSETA$ is that  (i) all the three edges of $C_{out}$  must 
contain at least one point, and  at least one edge of $C_{in}$ must contain at least one point
or (ii) all the three edges of $C_{in}$  must 
contain at least one point, and  at least one edge of $C_{out}$ must
contain at least one point. In both the cases, these four points of $P$ are of different colors.
\end{lemma}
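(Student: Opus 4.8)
The plan is to mirror the perturbation argument used for the rectangular annulus in Lemma~\ref{four_points}, adapting it from four edges to the three edges of an equilateral triangle. The whole proof rests on one enabling primitive: because the annulus has a fixed orientation, the three edge-lines of $C_{out}$ (and of $C_{in}$) have fixed outward normals lying at mutual angles of $120^\circ$. Hence any single edge may be translated along its own normal while the bounding triangle stays equilateral of the same orientation; only its size and centroid change. This lets us ``push'' one edge at a time exactly as one pushes a side of a rectangle.

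First I would verify that this primitive preserves a valid annulus. If I translate an edge $e$ of $C_{out}$ inward by a perpendicular amount $t$ and translate the similar edge of $C_{in}$ inward by the same $t$, then the perpendicular distance between this pair of similar edges is unchanged, and the other two pairs are untouched, so the uniform width $\delta$ is preserved. The point I expect to be the crux is concentricity: a short computation on an equilateral triangle (apex fixed, one edge moved perpendicularly by $t$) shows the centroid shifts by exactly $\tfrac{2}{3}t$ along that normal, and---crucially---this amount is independent of the triangle's size. Therefore the centroids of $C_{in}$ and $C_{out}$, which coincide before the move, shift by the same $\tfrac{2}{3}t$ and still coincide afterwards, so $\mathcal{A}$ remains a genuine concentric, uniform-width $CSETA$. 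This size-independence of the $\tfrac{2}{3}$ factor is the one nontrivial geometric fact and the main obstacle; everything else is bookkeeping.

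With the primitive in hand, \emph{Part (i)}: assume for contradiction that a minimum-width $CSETA$ has an edge $e$ of $C_{out}$ carrying no point. Translate $e$ inward (together with the similar edge of $C_{in}$) until $e$ first meets a point of $P$ whose color differs from those already on $C_{out}$, as in Lemma~\ref{four_points}. The open region swept at $e$ contains no point of a new color, so no color is lost through $C_{out}$, while points may only enter the annulus across the retreating $C_{in}$ edge; hence $\mathcal{A}$ stays color-spanning and $\delta$ is unchanged. Repeating this for each point-free edge forces all three edges of $C_{out}$ to carry points. If at this stage no edge of $C_{in}$ carries a point, I expand $C_{in}$ by translating its three edges outward simultaneously (its centroid fixed by symmetry), which strictly decreases $\delta$ until some edge of $C_{in}$ hits a point, contradicting minimality. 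Thus a minimum-width annulus already has a point on some edge of $C_{in}$, giving the four points of case (i); their distinctness of color and the emptiness of $INT(\mathcal{A})$ of those colors are exactly Observation~\ref{distinct-color}.

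\emph{Part (ii)} is the symmetric statement, obtained by interchanging the roles of $C_{in}$ and $C_{out}$ in the above (pushing a point-free edge of $C_{in}$ outward while dragging the similar edge of $C_{out}$, then shrinking $C_{out}$ to tighten the width). The necessity direction is what these perturbations establish; sufficiency is immediate, since any configuration meeting (i) or (ii) is rigid enough---four contacts pin the four degrees of freedom (centre, outer size, width)---that no width-decreasing motion remains. I would double-check only the degenerate stopping cases (an edge leaving the point set before meeting a new color, or $C_{in}$ collapsing so that $C_{in}=\emptyset$), handling them exactly as the rectangular proof does.
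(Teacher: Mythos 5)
Your proposal is correct and follows essentially the approach the paper intends: the paper states this lemma without an explicit proof, implicitly relying on the same edge-translation perturbation it spells out for the rectangular case (Lemma~\ref{four_points}), and your argument is exactly that perturbation carried over to the triangle, including the deferral of vertex-coincidence degeneracies to the situation covered by Corollary~\ref{col}. The one ingredient you supply beyond what the paper writes down---that translating an edge of a fixed-orientation equilateral triangle inward by $t$ shifts its centroid by $\tfrac{2}{3}t$ along that normal independently of the triangle's size, so the simultaneous motion of similar edges of $C_{in}$ and $C_{out}$ keeps them concentric and the width unchanged---is exactly the fact needed to make the adaptation legitimate, and your computation of it is correct.
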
 

\begin{col}\label{col}
 If a point lies on a vertex of $C_{in}$ or $C_{out}$, 
then a $CSETA$ can be defined by three points (instead of four points) on boundary\footnote{it 
can be shown in a similar way as in lemma \ref{lemma_boundary}.} (see Fig. \ref{boundary}(b)).  
\end{col}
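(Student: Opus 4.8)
The plan is to exploit the elementary geometric fact that a vertex of an equilateral triangle is the common endpoint of exactly two of its three edges, so a point coinciding with a vertex lies simultaneously on both of those edges. This lets a single point discharge the incidence requirement of two edges at once, which is precisely what reduces the number of defining points from four to three. In this sense the corollary is a degenerate specialization of Lemma \ref{lemma_boundary}, obtained by letting two of the four defining points merge at a vertex.

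First I would treat the case where the point in question lies on a vertex of $C_{out}$, corresponding to part (i) of Lemma \ref{lemma_boundary}. That lemma requires every edge of $C_{out}$ to carry at least one point together with at least one point on an edge of $C_{in}$, all of pairwise distinct colors and consistent with Observation \ref{distinct-color}. If a point $p$ sits on a vertex of $C_{out}$, it already lies on the two edges meeting at that vertex, so the incidence condition for those two edges is satisfied by $p$ alone. It then remains only to place one point on the third edge of $C_{out}$ and one point on an edge of $C_{in}$; together with $p$ these are three points, and by the lemma they are of mutually different colors.

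Next I would handle the symmetric case in which the point lies on a vertex of $C_{in}$, corresponding to part (ii); the same counting applies verbatim, with the vertex point covering two edges of $C_{in}$, one further point on the remaining edge of $C_{in}$, and one point on an edge of $C_{out}$, again giving three points of distinct colors. As indicated in the footnote, this mirrors the self-parallel edge-sliding argument used in Lemma \ref{lemma_boundary}: one slides a point-free edge of the relevant triangle inward until it meets a new point, the only difference being that the starting configuration already has two edges pinned by a single vertex point rather than by two separate points.

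Since the statement is at heart a counting observation, I do not anticipate a genuine obstacle. The one point requiring care is verifying that the color-spanning and minimality guarantees of Lemma \ref{lemma_boundary} are preserved, namely that the three points are genuinely of distinct colors and that no point of a defining color intrudes into $INT(\cal A)$ (Observation \ref{distinct-color}). These follow directly because the vertex incidence is merely the four-point configuration with two of its defining points coinciding, so every property established for the four-point case continues to hold for the three surviving distinct points.
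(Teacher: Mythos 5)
Your proposal is correct and matches the paper's intended argument: the paper offers no explicit proof of the corollary beyond a footnote deferring to the edge-sliding technique of Lemma \ref{lemma_boundary}, and your observation that a vertex point pins two edges at once—so the four-point configuration degenerates to three points, with the sliding argument supplying minimality and the distinct-color condition carrying over via Observation \ref{distinct-color}—is precisely that argument made explicit.
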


%

\subsection{Algorithm}\label{algo1}
Here, we describe the general 
framework of the algorithm to find $CSETA$ of minimum width where 
$C_{out}$ is defined by three points of distinct colors (see Lemma \ref{lemma_boundary}). The
same method works to find a $CSETA$ of minimum width where $C_{in}$ is defined by three points.
We sort the points in $P$ with respect to their $y$-coordinates. Consider each bi-colored pair of 
points $p,q\in P$ with $x(p) < x(q)$. Consider a downward 
wedge $W_{p,q}$ at a vertex  $v_{p,q}$ by drawing a line of angle $60^o$ 
through the point $p$ and a line of angle $120^o$ through the point $q$ (see 
Fig. \ref{construction}). In linear time we can find the points $P_{p,q} 
\subseteq P$ lying in $W_{p,q}$ sorted with respect to their $y$-values. 
If $W_{p,q}$ is not color-spanning, $C_{out}$ cannot be defined by $W_{p,q}$. 
So, let us assume that $W_{p,q}$ is color-spanning and the pair $(p,q)$ defines two boundaries of 
$C_{out}$\footnote{similar idea works when it define $C_{in}$ by considering
the points outside the wedge to define $C_{out}$ instead of considering inner wedge points.}. Let $\ell_{p,q}$ be a 
vertical line through $v_{p,q}$. Now, the following results are important.

 \begin{figure}[h] 
 \centerline{\includegraphics[scale=0.5]{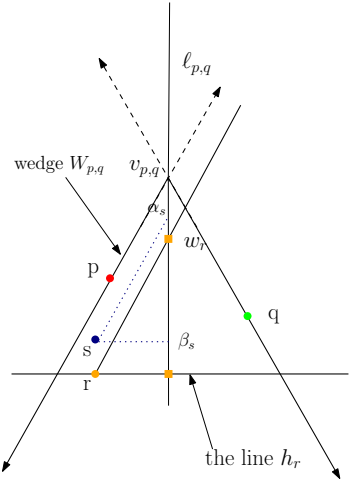}}
 \caption{Construction of wedges and lines}
  \label{construction}
\end{figure}
\begin{lemma} \label{obx} 
The base of $C_{out}$ corresponds to a horizontal line through a point $r 
\in P_{p,q}$, where $y(r) < \min(y(p),y(q))$ $color(r)\not\in \{color(p),
color(q)\}$. 
\end{lemma}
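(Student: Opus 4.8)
The plan is to read off both the existence and the location of $r$ directly from the extremality characterization in Lemma \ref{lemma_boundary}, rather than by any coordinate computation. We are building a $CSETA$ whose $C_{out}$ is defined by three points, and by construction the pair $(p,q)$ already sits on the two sides of $C_{out}$ that meet at the apex $v_{p,q}$ — the $60^o$ side carrying $p$ and the $120^o$ side carrying $q$. Lemma \ref{lemma_boundary} then guarantees that each of the three edges of $C_{out}$ contains at least one point and that the three contact points carry pairwise distinct colors. Since the two slanted edges already carry $p$ and $q$, the remaining (horizontal) base must carry a third point $r$, and the distinct-color clause of the lemma immediately gives $color(r)\not\in\{color(p),color(q)\}$.

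Next I would pin down the location of $r$. The base of $C_{out}$ is the bottom edge of the triangle; its two endpoints are the base corners, which lie on the two bounding lines of the wedge $W_{p,q}$, and its relative interior lies strictly between those lines. Hence the whole base is contained in the (closed) wedge $W_{p,q}$, so $r$, lying on the base, belongs to $P_{p,q}$. For the $y$-coordinate bound, note that in an equilateral triangle with horizontal base and apex on top, the apex is the unique topmost vertex and the base is the lowest horizontal level, so every point of the two slanted edges has $y$-coordinate at least that of the base. Since $p$ and $q$ lie on these slanted edges, this yields $y(p),y(q)\ge y(r)$.

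Finally I would handle strictness. Equality $y(p)=y(r)$ (or $y(q)=y(r)$) can occur only if $p$ (or $q$) sits at a base corner, i.e.\ on a vertex of $C_{out}$; but that is precisely the degenerate configuration isolated in Corollary \ref{col}, where a three-point (rather than four-point) description is used and is treated separately. Excluding it, $p$ and $q$ lie in the relative interiors of the slanted edges and hence strictly above the base, giving $y(r)<\min(y(p),y(q))$ and completing the argument.

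The main obstacle I anticipate is purely bookkeeping: isolating the vertex-incident degenerate cases so they are routed to Corollary \ref{col}, and fixing the convention that $W_{p,q}$ (equivalently $P_{p,q}$) is taken to be closed, so that a base point lying exactly on a bounding line of the wedge is still counted in $P_{p,q}$. Once these conventions are set, the statement follows structurally from Lemma \ref{lemma_boundary} with no computation.
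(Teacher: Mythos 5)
Your proposal is correct, but there is nothing in the paper to compare it against line by line: the paper states Lemma \ref{obx} bare, with no proof, and moves straight into the sweep algorithm. Your argument supplies exactly the justification the authors leave implicit, and it is the natural one: in the case the algorithm is handling (case (i) of Lemma \ref{lemma_boundary}, where $C_{out}$ carries three points of distinct colors), the two slanted edges already carry $p$ and $q$, so the guaranteed third point $r$ must lie on the base and have a color outside $\{color(p),color(q)\}$; containment of the base in the wedge gives $r\in P_{p,q}$; and the fact that the slanted edges lie on or above the base level gives $y(r)\le\min(y(p),y(q))$, with strictness once vertex-incident configurations are set aside.

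The two bookkeeping issues you flag are genuine imprecisions in the paper, and you resolve them in a defensible way. First, the paper explicitly defines wedges as \emph{open} regions, so a point $r$ at an endpoint of the base (on a bounding ray of $W_{p,q}$) would not literally belong to $P_{p,q}$; adopting the closed wedge, as you do, is needed for the lemma to be airtight. Second, on the degenerate corner case your treatment and the paper's differ slightly but cover the same configurations: you exclude the case $y(r)=\min(y(p),y(q))$ from the lemma and route it to Corollary \ref{col}, whereas the paper in effect includes it in the enumeration by starting the horizontal sweep at $r=p$ or $q$ (whichever is lower) and by remarking, in the proof of the final theorem, that the case of a point at a base endpoint is already covered via Lemma \ref{obx}. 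Either convention works, as long as one of them is fixed; your version has the advantage of keeping the lemma statement (strict inequality and the color condition on $r$) literally true.
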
 

For each point $r \in P_{p,q}$, we define two lines, namely 

\begin{itemize}
\item[$h_r$:]  the horizontal line through the point $r$, and 
\item[$w_r$:] the wedge line, which is a line of angle $60^o$ (resp. $120^o$) 
through the point $r$ depending on whether $r$ is to the left (resp. 
right) of $\ell_{p,q}$.
\end{itemize}

By Lemma \ref{obx}, we set $r = p$ or $q$ depending on whether $y(p)<y(q)$ or 
$y(p)>y(q)$ (consider the minimum one), and start sweeping a horizontal line. The 
points in $W_{p,q}$ encountered by the sweep line are the event-points. At 
each event-point $r$, if the triangle $\Delta$ defined by the vertex $v_{p,q}$ 
and the line $h_r$ in the wedge $W_{p,q}$ is color-spanning, $r$ defines the 
base of $C_{out}$ and we compute $C_{in}$ as follows:

$\blacktriangleright$ Create an array $D$ of size $k$ to store the distance of the closest 
point of each color $i$ from the boundary of $C_{out}$. Initialize 
$D[i]=\infty$ for all $i=1,2,\ldots, k$. 

$\blacktriangleright$ Let $\theta$ be the point of intersection of $h_r$ and $\ell_{p,q}$.

$\blacktriangleright$ For every point $s \in \Delta$, do the followings:

\begin{itemize}
  \item[$\bullet$]  Let $\alpha_s$ and $\beta_s$ be the points of intersection of $w_s$ 
  and $h_s$ with $\ell_{p,q}$. We will use the term {\em $\alpha$-point} and {\em 
  $\beta$-point} of $s$ to denote the points $\alpha_s$ and $\beta_s$ 
  respectively.  
  \item[$\bullet$] Compute the distances $d(\theta,\beta_s)$ and $d(v_{p,q},
  \alpha_s)$.
  \item[$\bullet$] The distance of $s$ from 
  the boundary of $C_{out}$ is $\mu=\min(d(\theta,\beta_s),\frac{d(v_{p,q},
  \alpha_s)}{2})$ (see Observation \ref{distance_twice}). If $s$ is of color $i$ then store $min(\mu,D[i])$ 
  in $D[i]$.
 \end{itemize}
 
 $\blacktriangleright$ The width of the annulus $\delta=\max_{i=1}^k D[i]$, which can be 
computed by a linear scan in the array $D$. Thus, $C_{in}$ is determined.

\begin{lemma}
The overall time complexity of this simple scheme is $O(n^4)$. 
\end{lemma}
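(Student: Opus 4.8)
The plan is to bound the running time by charging work to the three nested levels of the scheme and then using the fact that $k \le n$ to absorb every $O(k)$ term into $O(n)$. Recall from the problem statement that there is at least one point of each color, so $k \le n$ throughout.

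First I would account for the outermost level. After an initial $O(n\log n)$ sort of $P$ by $y$-coordinate, the algorithm iterates over every bi-colored ordered pair $p,q$ with $x(p)<x(q)$, of which there are $O(n^2)$. For a fixed pair, constructing the downward wedge $W_{p,q}$ at $v_{p,q}$, extracting the points $P_{p,q}$ lying inside it in $y$-sorted order, and testing whether $W_{p,q}$ is color-spanning can all be done by a single linear scan in $O(n)$ time, so this per-pair preprocessing is dominated by the sweep that follows.

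Next I would bound the middle and innermost levels. The event points of the horizontal sweep are exactly the points of $P_{p,q}$, so there are $O(n)$ events per pair. The crux is bounding the work at one event point $r$: testing whether the triangle $\Delta$ determined by $v_{p,q}$ and $h_r$ is color-spanning costs $O(n)$; reinitializing the array $D$ of size $k$ to $\infty$ costs $O(k)$; for each of the $O(n)$ points $s\in\Delta$ we compute $\alpha_s,\beta_s$, the distances $d(\theta,\beta_s)$ and $d(v_{p,q},\alpha_s)$, and the update $D[color(s)]\leftarrow\min(\mu,D[color(s)])$ in $O(1)$ per point; and the final scan of $D$ to read $\delta=\max_i D[i]$ costs $O(k)$. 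Hence a single event costs $O(n+k)$.

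Finally I would combine the levels. Since $k\le n$, each event costs $O(n+k)=O(n)$, giving $O(n^2)$ per pair over its $O(n)$ events, and $O(n^2)\cdot O(n^2)=O(n^4)$ over all pairs, which dominates the initial sort. The analysis is routine rather than subtle; the only point deserving care is the repeated $O(k)$ reinitialization of $D$ at every one of the $O(n^2)\cdot O(n)$ events, but this contributes only $O(n^3k)=O(n^4)$ and is therefore absorbed. I expect no genuine obstacle in the timing itself — the real content sits in the correctness claims (the boundary-point characterization of Lemma \ref{lemma_boundary} and the distance identity of Lemma \ref{distance_twice}) that justify computing $C_{in}$ this way, and those are already established.
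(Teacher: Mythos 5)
Your proof is correct and follows essentially the same decomposition as the paper's: $O(n^2)$ bi-colored pairs, $O(n)$ sweep events per pair, and $O(n)$ work per event to inspect the points of $\Delta$. The extra care you take with the $O(k)$ terms (initializing and scanning $D$) and the observation $k\le n$ are sound refinements of the same argument, not a different route.
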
 
\begin{proof}
For each of the $O(n^2)$ pair of points $(p,q)$, 
the sweep considers $O(n)$ event points in $P_{p,q}$. For each event point
$r \in P_{p,q}$, we need to spend $O(n)$ time in the worst case to inspect 
all the points in $\Delta$. Thus the result follows. \qed 
\end{proof}
We can improve the time complexity by maintaining two AVL trees $G$ and $H$ 
(instead of the array $D$) while processing each pair of points $(p,q)\in P$. 
Here, $G[i]$ stores the closest point $\alpha$-point of each color $i$ from $v_{p,q}$ and $H$ stores the {\em 
minimum $y$-coordinate} of the $\beta$-points of each color $i$. Each element of $G$ and $H$ is attached with 
the corresponding color index. Auxiliary arrays $G'$ and $H'$ are maintained, 
where $G[i]$ (resp. $H[i]$) stores the position (address of location) containing color $i$ in $G$ 
(resp. $H$). Thus, the size of both $G,H,G'$ and $H'$ are $O(k)$. We have the points 
$P$ in decreasing order of their $y$-coordinates. When a new point $s\in W_{p,q}$ 
of color $i\in \{1,2,\ldots, k\} \setminus\{color(p),color(q)\}$ is faced by the 
sweep line, we process $s$. The position of color $i$ in the array $G$ and $H$ 
are $j_1 = G'[i]$ and $j_2=H'[i]$ respectively. $G[j_1]$ is updated with 
$\min(\frac{d(v_{p,q},\alpha_s)}{2},G[j_1])$, and $H[j_2]$ is updated with $y(s)$. Next, both the AVL trees $G$ and 
$H$ are adjusted in $O(\log k)$ time. Now, define $r=s$ (i.e., the base line at 
point $s$), and scan both $G$ and $H$ in increasing order of their values using a 
merge like pass to find the required $C_{in}$ so that the annular region $\cal A$
is of minimum width and contains every color from either $G$ or $H$. Thus, we
have the following result:

 \begin{theorem}
 Given a set $P$ of $n$ points, each point is assigned with one of the $k$ 
possible colors, the minimum width $CSETA$ can be computed in $O(n^3 k)$ time 
using $O(k)$ additional space. 
 \end{theorem}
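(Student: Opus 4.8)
The plan is to derive the time and space bounds by charging the cost of the sweep-based scheme described above and multiplying by the number of bicolored pairs. First I would invoke Lemma~\ref{lemma_boundary} to justify that it suffices to consider, for every bicolored pair $(p,q)$, the wedge $W_{p,q}$ as carrying two edges of $C_{out}$, and to search for the third (base) edge among the wedge points by a downward horizontal sweep; by Corollary~\ref{col} and the symmetry noted in the text, the companion run in which $(p,q)$ carries two edges of $C_{in}$ is handled identically and merely doubles the work. There are $O(n^2)$ such pairs, and for each the wedge points $P_{p,q}$ together with their $y$-order can be extracted in $O(n)$ time and tested for color-spanning, so the outer bookkeeping contributes only $O(n^3)$, which is dominated by the final bound.

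The crux is to show that, for a fixed pair $(p,q)$, the entire downward sweep over the $O(n)$ candidate base lines $h_r$ (restricted to wedge points by Lemma~\ref{obx}) costs $O(nk)$ together with $O(n\log k)$ for maintenance. With $C_{out}$ fixed by $p,q,r$, the distance of a point $s$ of color $i$ to the boundary is $\min(d(\theta,\beta_s),\tfrac12 d(v_{p,q},\alpha_s))$ (using Lemma~\ref{distance_twice}), so the minimum feasible width is $\max_i\min(\text{base}_i,\text{slant}_i)$, where $\text{base}_i$ and $\text{slant}_i$ are the per-color minima. The key structural point is that the slant contribution $\tfrac12 d(v_{p,q},\alpha_s)$ is independent of the base line and can be accumulated once per color in $G$, while the base contribution equals $y(s)-y(r)$, a quantity that shifts by the \emph{same} amount for every point as $r$ descends; this is exactly why I store raw $y$-coordinates (not base distances) in $H$. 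Since the triangle $\Delta$ only grows as the base moves down, points are only \emph{inserted} into $G$ and $H$, each insertion/update (located through $G'$ and $H'$) costing $O(\log k)$, for a total of $O(n\log k)$ over the sweep.

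Next I would argue that evaluating the candidate width at a single base position costs $O(k)$. Because subtracting the common offset $y(r)$ preserves the $y$-order, $H$ remains sorted by base distance while $G$ is sorted by slant distance, so a single merge-like pass over the two sorted lists processes the per-color thresholds $t_i=\min(\text{base}_i,\text{slant}_i)$ in increasing order and returns the smallest width for which every color is covered by $G$ or by $H$, namely $\max_i t_i$. Taking the minimum of these candidate widths over the $O(n)$ base positions yields the best annulus for $(p,q)$. The per-pair cost is thus $O(n\log k+nk)=O(nk)$, and multiplying by the $O(n^2)$ pairs gives the claimed $O(n^3k)$ time; the only data maintained at any instant are $G,H,G',H'$, each of size $O(k)$, so (as with the $CSRA$ in-place scheme) the additional space is $O(k)$.

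The step I expect to be the main obstacle is the correctness of the incremental scheme under the moving base, rather than the arithmetic of the running time. I must verify the invariant that ``the closest point of each color to the boundary of $C_{out}$'' is preserved by the reparametrization that keeps slant minima in $G$ and recomputes base distances as $y(s)-y(r)$ on demand, and that the merge-like pass indeed returns $\max_i\min(\text{base}_i,\text{slant}_i)$, i.e.\ the minimum width keeping $\cal A$ color-spanning. This rests on the commutativity $\min_s\min(a_s,b_s)=\min(\min_s a_s,\min_s b_s)$ together with the fact that only insertions occur during a sweep. Once this invariant is in place, combining it with Lemma~\ref{obx} (which confines the base to a wedge point) and Lemma~\ref{lemma_boundary} (which guarantees an optimum of this combinatorial type) closes the argument.
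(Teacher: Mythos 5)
Your proposal reproduces the paper's own proof almost exactly: the same enumeration of $O(n^2)$ bicolored wedges $W_{p,q}$, the same two AVL trees $G$ and $H$ with the index arrays $G'$ and $H'$, the same $O(\log k)$ insertion cost and $O(k)$ merge-like pass per event point, and the same $O(n^3k)$ time and $O(k)$ space accounting. In fact, your middle paragraphs (the common-offset reparametrization of base distances via raw $y$-coordinates, the identity $\min_s\min(a_s,b_s)=\min(\min_s a_s,\min_s b_s)$, and the observation that the sweep is insertion-only because $\Delta$ only grows) make explicit the correctness invariant that the paper leaves largely implicit, so on that front your write-up is more careful than the published proof.

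There is, however, one genuine gap, and it is in the enumeration rather than in the sweep. You cite Corollary~\ref{col} to justify the companion run in which $(p,q)$ carries two edges of $C_{in}$; that symmetric case is actually licensed by part (ii) of Lemma~\ref{lemma_boundary} (together with the paper's remark about using points outside the wedge). Corollary~\ref{col} asserts something different: the optimum may be a degenerate configuration in which a single point lies on a \emph{vertex} of $C_{out}$ or $C_{in}$, so that only three points define the annulus. If that vertex is the apex, both slant edges pass through the same point $t$, and no pair of \emph{distinct} bicolored points generates the wedge anchored at $t$; your enumeration therefore never examines such a configuration, and the optimality argument is incomplete for instances whose unique optimum has a point at the apex. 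The paper closes exactly this hole in its theorem proof: for every single point $p$ taken as apex, it forms the wedge from the two lines of angle $60^o$ and $120^o$ through $p$ and reruns the same sweep (an extra $O(n)$ wedges costing $O(n^2k)$ in total, which is dominated), while a point sitting at a base endpoint needs no separate treatment because, by Lemma~\ref{obx}, the base sweep already allows the base line to pass through $p$ or $q$. Adding this one case makes your argument complete without changing the stated bounds.
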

\begin{proof}
The correctness of the algorithm follows from Lemma \ref{lemma_boundary}, its corollary, and the algorithm
where we always maintain the nearest points of each color corresponding to the wedge
boundaries and the base line in two $O(k)$ sized AVL trees.
The time complexity is analyzed considering the fact that we need to consider $n\choose 2$ 
pairs of bi-colored points $(p,q) \in P$ to define the wedge $W_{p,q}$. During 
the line sweep for the processing $W_{p,q}$, at each event point $r\in P_{p,q}$ 
we spend $O(k)$ time\footnote{The adjustment of the AVL-trees $G$ and $H$ needs 
$O(\log k)$ time. But, to find the elements from $G$ and $H$ to determine the 
minimum width annulus $\cal A$ needs a linear scan in $G$ and $H$.}.  Hence, 
the time complexity result follows. The extra space requirement follows from the 
size of $G$, $H$, $G'$ and $H'$. 

Corollary \ref{col} suggested that we need to consider $C_{out}$ with each point 
on one of its three vertices. When a point $p\in P$ is considered as apex, the 
wedge is defined by the two lines of angle $60^o$ and $120^o$ with the axis 
through the point $p$ and similar method is executed to get the optimum $CSETA$.
The case when the point $p\in P$ is the left (resp. right) endpoint of the base, 
has already been considered in our algorithm due to Lemma \ref{obx}.
%
Thus the result follows.
\qed
\end{proof}

\end{document}